\documentclass[AMA,STIX1COL]{WileyNJD-v2}

\usepackage{graphicx}
\usepackage{mathtools,amsmath,amsfonts} 
\usepackage{psfrag}
\usepackage{caption,subcaption}
\usepackage{caption}
\usepackage{subcaption}
\usepackage{multirow}
\usepackage{epsfig,epstopdf}
\usepackage{algorithm}
\usepackage{enumitem}
\usepackage{hyperref}

\newcommand\calc{\ensuremath{\mathcal C}}
\newcommand\real{\ensuremath{\mathbb R}}

\newcommand\kset{\ensuremath{\mathbb K}}

\DeclareMathOperator{\argmin}{argmin}

\newtheorem{opt}{Optimization Problem}

\articletype{Article Type}%


\raggedbottom

\begin{document}

\title{{LMI} relaxations and its application to data-driven control design for switched affine systems
}

\author[1]{Alexandre Seuret*}

\author[1]{Carolina Albea}

\author[1]{Francesco Gordillo}

\authormark{A. Seuret \textsc{et al}}

\address[1]{\orgdiv{Dpto de Ingeniería de Sistemas y Automática}, \orgname{Univ. de Sevilla}, \orgaddress{\country{Spain}}}

\corres{*Alexandre Seuret. \email{aseuret@us.es}}

\presentaddress{Camino de los Descubrimientos, s/n 41092 Sevilla, Spain}

\abstract[Summary]{The problem of data-driven control is addressed here in the context of switched affine systems. This class of nonlinear systems is of particular importance when controlling many types of applications in electronic, biology, medicine, etc. Still in the view of practical applications, providing an accurate model for this class of systems can be a hard task, and it might be more relevant to work on data issued from some trajectories obtained from experiments and to deploy a new branch of tools to stabilize the systems that are compatible with the processed data. Following the recent concept of data-driven control design, this paper first presents a generic equivalence lemma that shows a matrix constraint based on data, instead of the system parameter. Then, following the concept of robust hybrid limit cycles for uncertain switched affine systems, robust model-based and then data-driven control laws are designed based on a Lyapunov approach. The proposed results are then illustrated and evaluated on an academic example.}

\keywords{Switching affine systems, Robust stabilization, Data-driven control design, LMI.}

\maketitle
\footnotetext{\textbf{Abbreviations:} LMI, Linear Matrix inequality}

\section{Introduction }\label{sec1}

Over the past decades, robust control theory has been developed to solve analysis and control problems for a wide class of dynamical systems \cite{ebihara2015s,postlethwaite2007robust,scherer2001theory}. The main advantages of this method rely on the possibility to derive stability and stabilization tests as to evaluate and measure the robustness of systems as to control these uncertain systems. These tests are often written in the form of Linear Matrix Inequalities (LMIs) that can be easily and efficiently solved using semi-definite programming.  Generally, this direction of research has a common requirement, namely knowledge of a model, even though some parameters in the model might be subject to uncertainties modeled as norm-bounded or polytopic uncertainties. Recently, due to the emergence of artificial intelligence and the possibility of storing numerous data, the area of data-based identification has been enhanced.  The main motivation for this new paradigm is to avoid or limit the modeling phase to some extents and to better rely on a set of several experiments providing data in order to compensate for the lack of knowledge on the model. These methods have been developed to eliminate the common feature described above.  

In this direction, reinforcement learning methods have been developed in the literature; see, for instance, \cite{boczar2018finite},\cite{recht2019tour} to cite only a few of them. They mainly deal with system identification, estimation of the model, but rarely address closed-loop guarantees, and only few papers provide non-conservative constructive stabilization conditions for the closed-loop system using noisy data of finite length, which remains an open problem, even if the data are generated by a Linear Time-Invariant (LTI) system.
\newline
Indeed, the automatic control community has recently shown a growing interest in the problem of data-driven control design for various classical control problems (see, for instance \cite{berberich2020robust,berberich2020data,hou2013model}), enhancing the tools arising from robust control theory. In this context, the problem can be summarized as follows: \textit{How to translate a model-based stability or stabilization criteria into a data-driven ones, without introducing conservatism?} \newline A catalog of formulas has been provided recently in \cite{dePersis2019formulas}, where several problems related to the design of state/output feedback controllers for linear systems have been considered. It is notably shown therein that in the case of exact data experiments arising from linear time-invariant systems, i.e., without noise or uncertainties, equivalent formulations between model-based and data-based criteria have been obtained. More recently, another elegant solution to the data-driven design problem for linear systems has been provided in \cite{vanWaarde2020noisy,van2020data}. 

In this paper, an extension of the S-procedure has been proposed to eliminate in an LMI condition the uncertainties arising, for instance, from the model matrices. The underlying idea is to embed the uncertainties brought by the data experiments into an LMI constraint, which is considered an assumption in the design method. The authors have interestingly demonstrated that this assumption allows eliminating the model matrices at the price of introducing a single decision variable to the initial conditions for stability or stabilization. A similar assumption was also made in \cite{berberich2020robust,berberich2020combining}.
To the best of our knowledge, this trend of research has mainly been considered for linear systems. For the nonlinear cases, the reader may look at the case of bilinear input-affine systems \cite{bisoffi2020data}, referring to many relevant applications in engineering, medicine, or ecology \cite{mohler1973bilinear}. The authors demonstrate therein how their formulas provided in \cite{dePersis2019formulas} for the linear case can be efficiently adapted to this class of nonlinear systems.

In this paper, the objective is to demonstrate that data-driven methods can also be applied to a particular class of nonlinear systems, namely switched affine systems. This class of systems represents a highly relevant theoretical and practical area of research. From a practical point of view, they have been employed to model numerous applications such as embedded systems, electronic power converters, mixing fluids, damping of vibrating structures, mobile sensor networks \cite{antunes2016linear},  characterizing these complex systems and not intuitive behaviors. From a theoretical point of view, they represent a particular class of hybrid dynamical systems~\cite{goebel2012hybrid}\cite{liberzon2003switching}. A particularity of switched affine systems arises from the fact that, in general, it is not possible to stabilize their solution to a single equilibrium point but rather to a (hybrid) limit cycle, as shown in \cite{egidio2020global} and later refined in \cite{serieye2020stabilization}. 

 In the context of data-driven methods for switched systems (affine or not), the authors of \cite{kenanian2019data,rotulo2021online} provided the first attempts on switched systems (without affine terms). However, there are only a few works considering data-driven control design for switched affine systems, which is the objective of this paper. \textcolor{blue}{To the best of our knowledge, the only contribution in this direction has been presented in \cite{della2021data}, where the authors consider the stability analysis of this class of systems, when the switching signal is assumed to be an arbitrary exogenous input.  The authors of \cite{smarra2018data} use machine learning algorithms, especially regression trees and random forests to model a switched affine system using only historical data. The model is later used to establish a model predictive control or more precisely a Data Predictive Control (DPC) to obtain optimal control system trajectories. In this paper, the model under consideration is of the form $x^+=A_\sigma x +B_\sigma u+ F_\sigma$, where $x$ is the state (and $x^+$ the forward value of  $x$), $u$ the control law and $\sigma$ the switching signal. This system refers to a different class of switched affine systems since it is composed of a switched linear controlled systems $(A_\sigma x +B_\sigma u)$ with affine switching terms $(f_\sigma)$ where $\sigma$ is an exogenous signal.}
\newline
In contrast to these works, here we consider a data-driven control design method for switched affine systems, where the control variable is the switching signal. Following the framework provided in \cite{serieye2022attractor} dealing with an uncertain model-based method, this paper will present a robust data-driven control design for this class of systems. \textcolor{blue}{To do so the following objectives will be addressed in the paper
\begin{itemize}
\item Provide a technical generic tool that allows transforming a model-based criteria for discrete-time systems into a data-driven one.
\item Solve a problem of model-based design of a stabilizing switching state-feedback control law for switched affine systems subject to an external disturbance.
\item Illustrate the potential of the preliminary technical tool though the data-driven design of switching control laws for switched affine systems. 
\end{itemize}}

The structure of the paper is the following. Section 2 introduces a novel matrix-constraint relaxation, its relationship with the existing literature and a short discussion about its potential on the data-driven stabilization of linear systems. Then, Section 3 deals with the model-based stabilization of switched affine systems subject to a bounded external disturbance. Then, thanks to the results in the two previous sections, the main contribution of the paper is presented in Section 4 addressing the data-driven stabilization of switched affine systems. These results are then illustrated in Section 5, where a numerical application of the theoretical contributions is treated.

 \textbf{Notations:} Throughout the paper, $\mathbb N$ denotes the set of natural numbers, $\real$ the real numbers, $\real^n$ the $n$-dimensional Euclidean space, $\real^{n\times m}$ the set of all real $n \times  m$ matrices and $\mathbb S^n$ the set of symmetric matrices in $\real^{n\times n}$.  For all scalars $0<a<b$, notation $[a,b]_\mathbb N$ stands for $[a,b]\bigcap \mathbb N$, which represents the set of integers included in $[a,b]$. For any $n$ and $m$ in $\mathbb N$, matrices $I_n$ and $\mathbf{0}_{n,m}$ ($\mathbf{0}_{n}=\mathbf{0}_{n,n}$) denote the identity matrix of $\mathbb R^{n\times n}$ and the null matrix of $\mathbb R^{n\times m}$, respectively. For any integer $n$, $\mathbf{1}_{n}$ stands for the vector in $\mathbb R^n$, whose entries are all equal to $1$. When no confusion is possible, the subscripts of these matrices that precise the dimension will be omitted. For any matrix $M$ of $\mathbb R^{n\times n}$, the notation $M\succ0$, ($M\prec0$) means that $M$ is symmetric positive (negative) definite and $\det(M)$ represents its determinant. For any matrices $A=A^\top,B,C=C^\top$ of appropriate dimensions, matrix $\left[\begin{smallmatrix}A&B\\\ast & C  \end{smallmatrix} \right]$ denotes the symmetric matrix $\left[\begin{smallmatrix}A&B\\ B^\top& C  \end{smallmatrix} \right]$. $\Vert x \Vert$ denotes the Euclidean norm of $x$. 
 For a matrix $M\in\mathbb S^n$, $M\succ 0$ and a vector $h\in\mathbb R^{n}$, we denote the shifted ellipsoid $\mathcal E(M, h)=\left\{
x\in\mathbb R^n,~ (x-h)^\top M (x-h)\leq 1\right\}$.

\section{Matrix-constrained Relaxation} 
\label{Sec:Prel}

\subsection{New lemma for data-driven analysis}
The following lemma, which is the first contribution of the paper, presents a generic method to transform a problem of a particular matrix inequality depending on parameters that verify a quadratic constraint into a formulation that is independent of these parameters. Several solutions to this problem, also known as Quadratic Matrix Inequality, are stated below.

\begin{lemma}\label{lem0}
For given positive integers $n,m,q$, consider matrices $(\mathcal M_1,\mathcal M_2,\mathcal N_1,\mathcal N_2)$ in $\mathbb S^{q}\times  \mathbb S^{n}\times  \mathbb R^{q\times n}\times  \mathbb R^{q\times m}$ and a symmetric matrix $\Psi=\begin{bsmallmatrix}
 \Psi_1&\Psi_2\\
 \ast &\Psi_3\\
\end{bsmallmatrix}\in\mathbb S^{n+p}$  such that $\Psi_3\prec 0$. 

Then, the following statements are equivalent.
\begin{itemize}
\item[\textit{(i)}] Inequality 
\begin{equation}\label{lem0_1}
\mathcal M(\mathcal A)=\begin{bmatrix}
\mathcal M_{1} & \mathcal N_{1}+\mathcal N_2
\mathcal A^\top\\
\ast & \mathcal M_{2}
\end{bmatrix}\succ 0,\quad \forall \mathcal A\in  \Sigma(\Psi)
\end{equation}
holds. $ \Sigma(\Psi)$ represents the nonempty set of allowable uncertain matrices $\mathcal A$ characterized by a quadratic constraint defined as follows,
 \begin{equation}\label{def_Sigma}
 \Sigma(\Psi):=\left\{
 \mathcal A \in\mathbb R^{n\times m} \mbox{ s.t. } \left[\begin{matrix}
I_n\\
\mathcal A^\top\\
\end{matrix}\right]^\top  \begin{bmatrix}
 \Psi_1&\Psi_2\\
 \ast &\Psi_3\\
\end{bmatrix}\left[\begin{matrix}
I_n\\
\mathcal A^\top\\
\end{matrix}\right]\succeq 0
 \right\}.
 \end{equation}

\item[\textit{(ii)}] There exist matrices $(\mathcal R,\mathcal N) \in \mathbb S^{n}\times\mathbb  R^{q\times n}$  and a positive scalar $\eta>0$ such that 
\begin{equation}\label{lem:IneqY}
\begin{bmatrix}
\mathcal M_1 &
\mathcal N 
& \mathcal N_1- \mathcal N&\mathcal N_2\\
\ast &  \mathcal M_2-\mathcal R&0&0\\
\ast&\ast &  \mathcal R-\eta \Psi_1 & -\eta \Psi_2\\
\ast&\ast&\ast&-\eta \Psi_3\\
\end{bmatrix}\succ 0.
\end{equation}

\item[\textit{(iii)}] There exists a positive scalar $\eta>0$ such that 
\begin{equation}\label{lem:IneqY2}
\begin{bmatrix}
\mathcal M_1 &
\mathcal N_1 
&\mathcal N_2\\
\ast &  \mathcal M_2-\eta \Psi_1 & -\eta \Psi_2\\
\ast&\ast&-\eta \Psi_3\\
\end{bmatrix}\succ 0.
\end{equation}
\end{itemize}
\end{lemma}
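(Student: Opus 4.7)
The plan is to establish the three conditions are equivalent through the chain \textit{(iii)}\,$\Rightarrow$\,\textit{(i)}, \textit{(i)}\,$\Rightarrow$\,\textit{(iii)}, combined with the algebraic equivalence \textit{(ii)}\,$\Leftrightarrow$\,\textit{(iii)}. The first implication and the pair \textit{(ii)}\,$\Leftrightarrow$\,\textit{(iii)} reduce to congruence and LMI manipulations, whereas \textit{(i)}\,$\Rightarrow$\,\textit{(iii)} is the heart of the lemma and relies on a lossless S-procedure; I expect this last step to be the main obstacle.

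For \textit{(iii)}\,$\Rightarrow$\,\textit{(i)}, I would introduce the full-column-rank matrix $T(\mathcal A)=\left[\begin{smallmatrix} I_q & 0 \\ 0 & I_n \\ 0 & \mathcal A^\top \end{smallmatrix}\right]$ and apply it as a congruence to \eqref{lem:IneqY2}. A direct block computation yields
\[
T(\mathcal A)^\top\begin{bmatrix}\mathcal M_1&\mathcal N_1&\mathcal N_2\\ \ast&\mathcal M_2-\eta\Psi_1&-\eta\Psi_2\\ \ast&\ast&-\eta\Psi_3\end{bmatrix}T(\mathcal A)=\mathcal M(\mathcal A)-\eta\,\diag\!\bigl(0,\,H^\top\Psi H\bigr),
\]
with $H=\left[\begin{smallmatrix}I_n\\ \mathcal A^\top\end{smallmatrix}\right]$. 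The left-hand side is positive definite by the injectivity of $T(\mathcal A)$ together with \textit{(iii)}, and the constraint $\mathcal A\in\Sigma(\Psi)$ forces $H^\top\Psi H\succeq 0$, so rearranging gives $\mathcal M(\mathcal A)\succ 0$, i.e., \textit{(i)}.

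The converse \textit{(i)}\,$\Rightarrow$\,\textit{(iii)} is the main technical step. I would recast \textit{(i)} as a parameter-dependent matrix inequality on $\mathcal A$ subject to the single quadratic matrix constraint $H^\top\Psi H\succeq 0$, and invoke a lossless matrix-valued S-procedure in the spirit of Petersen's lemma or Scherer's full-block S-procedure. The assumption $\Psi_3\prec 0$ plays the role of the Slater-type condition that makes the S-procedure lossless: it ensures $\Sigma(\Psi)$ admits a strict interior direction, so that the scalar multiplier $\eta>0$ appearing in \textit{(iii)} is necessary and not merely sufficient. A clean closing argument is by contradiction with LMI duality: infeasibility of \textit{(iii)} for every $\eta>0$ would produce a positive semidefinite dual certificate yielding some $\mathcal A\in\Sigma(\Psi)$ at which $\mathcal M(\mathcal A)$ fails to be positive definite, contradicting \textit{(i)}.

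Finally, \textit{(ii)}\,$\Leftrightarrow$\,\textit{(iii)} is purely algebraic. For \textit{(ii)}\,$\Rightarrow$\,\textit{(iii)}, apply the injective congruence $\widetilde T=\left[\begin{smallmatrix}I_q&0&0\\ 0&I_n&0\\ 0&I_n&0\\ 0&0&I_m\end{smallmatrix}\right]$ to \eqref{lem:IneqY}: duplicating the middle block makes the slack variables telescope as $(\mathcal M_2-\mathcal R)+(\mathcal R-\eta\Psi_1)=\mathcal M_2-\eta\Psi_1$ and $\mathcal N+(\mathcal N_1-\mathcal N)=\mathcal N_1$, so that $\widetilde T^\top[\text{\eqref{lem:IneqY}}]\widetilde T$ equals the LMI in \eqref{lem:IneqY2}. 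Conversely, given \textit{(iii)}, the explicit choice $\mathcal N=0$ and $\mathcal R=\mathcal M_2-\epsilon I$ with $\epsilon>0$ small enough renders \eqref{lem:IneqY} block diagonal after a row/column permutation, with blocks $\epsilon I\succ 0$ and a small $\epsilon$-perturbation of \eqref{lem:IneqY2}, both positive definite by continuity of eigenvalues.
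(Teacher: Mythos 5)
Your three ``algebraic'' steps are correct and essentially coincide with the paper's. The implication \textit{(iii)}$\Rightarrow$\textit{(i)} via the congruence $T(\mathcal A)$ is exactly the paper's last step, including the identity $T(\mathcal A)^\top[\eqref{lem:IneqY2}]T(\mathcal A)=\mathcal M(\mathcal A)-\eta\,\mathrm{diag}\bigl(0,H^\top\Psi H\bigr)$; the implication \textit{(ii)}$\Rightarrow$\textit{(iii)} by duplicating the middle block is the paper's congruence written as a tall matrix instead of a wide one; and your direct proof of \textit{(iii)}$\Rightarrow$\textit{(ii)} with $\mathcal N=0$, $\mathcal R=\mathcal M_2-\epsilon I$ is a clean shortcut the paper does not need (it closes the loop through \textit{(i)} instead).

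The gap is in \textit{(i)}$\Rightarrow$\textit{(iii)}, which you correctly identify as the heart of the lemma but then do not prove. Three things are missing. First, $\mathcal M(\mathcal A)$ in \eqref{lem0_1} is not of the form $\begin{bsmallmatrix}I\\ \mathcal A_s^\top\end{bsmallmatrix}^\top\mathcal M_s\begin{bsmallmatrix}I\\ \mathcal A_s^\top\end{bsmallmatrix}$ until you perform the embedding $\mathcal A_s^\top=[\,0_{m\times q}\ \ \mathcal A^\top\,]$ with padded $\mathcal M_s,\Psi_s$ (the paper spells this out in Section~2.2); you assert the recasting without doing it. Second, once embedded, the padded $\Psi_s$ has a zero $q\times q$ leading block, so $\begin{bsmallmatrix}I\\ \mathcal A_s^\top\end{bsmallmatrix}^\top\Psi_s\begin{bsmallmatrix}I\\ \mathcal A_s^\top\end{bsmallmatrix}$ can never be positive definite and the strict Slater hypothesis of the matrix S-lemma (Lemma~\ref{lem:Waarde}) fails; your claim that ``$\Psi_3\prec 0$ plays the role of the Slater condition'' is the right intuition, but it requires the specific variant of the S-lemma in which generalized Slater is replaced by $\Psi_3\prec0$ plus nonemptiness of $\Sigma(\Psi)$, and you neither cite nor prove that variant. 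Third, your fallback ``contradiction with LMI duality'' skips the only hard part of losslessness: a positive semidefinite dual certificate must be reduced to one of the structured form $\begin{bsmallmatrix}I\\ \mathcal A^\top\end{bsmallmatrix}\!\begin{bsmallmatrix}I\\ \mathcal A^\top\end{bsmallmatrix}^\top\!$-type to produce an actual $\mathcal A\in\Sigma(\Psi)$ violating \eqref{lem0_1}, and that rank-reduction step is the entire content of the matrix S-lemma. For comparison, the paper attacks the hard direction as \textit{(i)}$\Rightarrow$\textit{(ii)}: it rewrites the constraint $\mathcal A\in\Sigma(\Psi)$ as a strict LMI in $\begin{bsmallmatrix}I_n\\ \mathcal A^\top\end{bsmallmatrix}$ via a Schur complement enabled by $\Psi_3\prec0$ and a free matrix $\tilde{\mathcal R}$, applies an S-procedure with multiplier $\eta$, and then chooses the free matrix $\eta\tilde{\mathcal N}=[\,\mathcal N_1-\mathcal N\ \ \mathcal N_2\,]$ so that the $\mathcal A$-dependent blocks cancel identically. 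That construction is what you would need to reproduce (or replace by a correctly invoked Slater-free matrix S-lemma) for the proposal to be complete.
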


\begin{proof} The proof is divided into three steps. 
\newline \underline{\textit{(i)$\Rightarrow$(ii):}} The first step of the proof is to find an appropriate expression of matrices $\mathcal A$ that belong to $\Sigma(\Psi)$. 
For any matrix $\mathcal A$ in $\Sigma(\Psi)$, it holds
$$\begin{array}{lcl}
0&\preceq& \left[\begin{matrix}
I_n\\
\mathcal A^\top\\
\end{matrix}\right]^\top \!\!\begin{bmatrix}
 \Psi_1&\Psi_2\\
 \ast &\Psi_3\\
\end{bmatrix}\left[\begin{matrix}
I_n\\
\mathcal A^\top\\ 
\end{matrix}\right]=
\tilde {\mathcal R}-\left[\begin{matrix}
I_n\\
\mathcal A^\top\\
\end{matrix}\right]^\top  \!\!\left[\begin{matrix} \tilde {\mathcal R}-\Psi_1& -\Psi_2\\ \ast &-\Psi_3 \end{matrix} \right]   \left[\begin{matrix}
I_n\\
\mathcal A^\top\\
\end{matrix}\right],
\end{array}
$$
where $\tilde {\mathcal R}$ is any matrix in $\mathbb S^n$. In addition, condition $\Psi_3\prec 0$ ensures that there exists a matrix $\tilde {\mathcal R}$ such that $ \left[\begin{smallmatrix}  \tilde {\mathcal R}-\Psi_1& -\Psi_2\\ \ast &-\Psi_3 \end{smallmatrix} \right] $ is positive definite, which allows applying the Schur complement as follows
$$
\begin{bmatrix}
\left[\begin{matrix} \tilde {\mathcal R}-\Psi_1& -\Psi_2\\ \ast &-\Psi_3 \end{matrix} \right] ^{-1} & \
\left[\begin{matrix}
I_n\\
\mathcal A^\top\\ 
\end{matrix}\right]\\
\ast& \tilde {\mathcal R}
\end{bmatrix}\succ0.
$$

Note that it is not the usual way to apply the Schur complement, but this dual way has been considered to keep block $(1,2)$ (resp. $(2,1)$) with $ \left[\begin{smallmatrix}
I_n\\
\mathcal A^\top\\ 
\end{smallmatrix}\right]$ (resp. its transpose). Next, pre- and post-multiply the previous inequality by $\left[\begin{smallmatrix}
\tilde {\mathcal N} &0\\ 0&I \end{smallmatrix}\right]$ and its transpose, respectively, where $\tilde {\mathcal N}$ is any matrix in $\mathbb R^{p\times (n+m)}$. Then, having $\mathcal A\in  \Sigma(\Psi)$ implies 
\begin{equation}\label{Lem0_Proof_cond}
\begin{bmatrix}
\tilde {\mathcal N} \left[\begin{matrix} \tilde {\mathcal R}-\Psi_1& -\Psi_2\\ \ast &-\Psi_3 \end{matrix} \right] ^{-1} \!\!\! \!\! \tilde {\mathcal N}^\top & \tilde {\mathcal N}\left[\begin{matrix}
I_n\\
\mathcal A^\top\\
\end{matrix}\right]\\
\ast& \tilde {\mathcal R}
\end{bmatrix}\succ0.
\end{equation}

The previous calculations ensure that inequality \eqref{lem0_1} can be rewritten as $\mathcal M(\mathcal A)\succ 0$, for all $\mathcal A$ so that \eqref{Lem0_Proof_cond} holds. Using an S-procedure, this statement is equivalent to the existence of a positive scalar $\eta>0$ such that
$$
\begin{bmatrix}
\mathcal M_{1} & \mathcal N_{1}\!+\!\mathcal N_2
\mathcal A^\top\\
\ast & \mathcal M_{3}
\end{bmatrix}\!-\!\eta 
\begin{bmatrix}
\tilde {\mathcal N} \left[\begin{matrix} \tilde {\mathcal R}\!-\!\Psi_1&\!\! -\Psi_2\\ \ast &\!\!-\Psi_3 \end{matrix} \right] ^{-1}
 \tilde {\mathcal N}^\top & \tilde {\mathcal N}\left[\begin{matrix}
I_n\\
\mathcal A^\top\\
\end{matrix}\right]\\
\ast& \tilde {\mathcal R}
\end{bmatrix}\succ 0,
$$
which, together with the Schur complement, writes
$$
\begin{bmatrix}
\mathcal M_{1} &\quad \left(\begin{bmatrix}\mathcal N_{1}&\mathcal N_{2}\end{bmatrix}-\eta \tilde {\mathcal N}\right)\left[\begin{matrix}
I_n\\
\mathcal A^\top\\
\end{matrix}\right] &\eta \tilde {\mathcal N}\\
\ast & \mathcal M_{2}-\eta \tilde {\mathcal R}& 0\\
\ast&\ast & \begin{bmatrix}\eta \tilde {\mathcal R}-\eta\Psi_1& -\eta\Psi_2\\ \ast &-\eta\Psi_3 \end{bmatrix} 
\end{bmatrix}
\succ 0.
$$

Selecting $\eta \tilde {\mathcal R}= {\mathcal R}$ and $\eta \tilde {\mathcal N} = \begin{bmatrix}\mathcal N_{1}-\mathcal N&\mathcal N_{2}\end{bmatrix} $ so that
$$
 \left(\begin{bmatrix}\mathcal N_{1}&\mathcal N_{2}\end{bmatrix}-\eta \tilde {\mathcal N}\right)\left[\begin{matrix}
I_n\\
\mathcal A^\top\\
\end{matrix}\right]=\begin{bmatrix} \mathcal N&0\end{bmatrix}\left[\begin{matrix}
I_n\\
\mathcal A^\top\\
\end{matrix}\right]=\mathcal N,
$$
makes disappear the dependence of the condition on the uncertain matrix $\mathcal A$ and concludes the first part of the proof.
\newline \underline{\textit{(ii)$\Rightarrow$(iii):}} Pre- and post-multiplying inequality \eqref{lem:IneqY} by $\begin{bsmallmatrix}I_n&0& 0&0\\ 
0& I_n & I_n &0\\
0&0&0&I_n
 \end{bsmallmatrix}$ and its transpose, respectively, leads to \eqref{lem:IneqY2}, where matrices $\mathcal R$ and $\mathcal N$ have been eliminated.
\newline \underline{\textit{(iii)$\Rightarrow$(i):}} Pre- and post-multiplying inequality \eqref{lem:IneqY2} by $\begin{bsmallmatrix}I_n&0& 0\\ 
0& I_n &   \mathcal A \end{bsmallmatrix}$ and its transpose, respectively, leads to 
$$
0\prec \begin{bmatrix}
\mathcal M_{1} &\mathcal N_{1}\!+\!\mathcal N_2
\mathcal A^\top\\
\ast & \mathcal M_{2}
\end{bmatrix}-\eta \begin{bmatrix}
0& 0\\
\ast &   \left[\begin{matrix}
I_n\\
\mathcal A^\top\\
\end{matrix}\right]^\top \!\!\! \begin{bmatrix} \Psi_1&\Psi_2\\
 \ast &\Psi_3\\
\end{bmatrix}
 \left[\begin{matrix}
I_n\\
\mathcal A^\top\\
\end{matrix}\right]
\end{bmatrix}.
$$

This inequality proves that inequality \eqref{lem0_1} holds for all matrices $\mathcal A$ that belong to $\Sigma(\Psi)$, since the second term is negative semi-definite.
\end{proof}

\subsection{Comparison with the S-Lemma} 
 
Lemma \ref{lem0} provides an alternative formulation in robust analysis for uncertain matrices subject to quadratic constraints of the form \eqref{def_Sigma} compared to the one presented in \cite{vanWaarde2020noisy,van2020data}. For the sake of consistency, the S-Lemma provided in \cite{vanWaarde2020noisy} is recalled in the following lemma.
\begin{lemma}{\cite[Th.9]{vanWaarde2020noisy}}\label{lem:Waarde}
Let $\mathcal M_s, \Psi_s \in \mathbb R^{(n_s+m_s)\times(n_s+m_s)}$ be symmetric matrices and assume that there exists some matrix $\mathcal A_s \in \mathbb R^{n_s\times m_s}$ such that
$
\begin{bsmallmatrix}
I_{n_s}\\
\mathcal A_s^\top
\end{bsmallmatrix}^\top  \Psi_s \begin{bsmallmatrix}
I_{n_s}\\
\mathcal A_s^\top
\end{bsmallmatrix}\succ0
$. Then the following statements are equivalent:
\begin{enumerate}
\item[(i)] $\begin{bmatrix}
I_{n_s}\\
\mathcal A_s^\top
\end{bmatrix}^\top \mathcal M_s \begin{bmatrix}
I_{n_s}\\
\mathcal A_s^\top
\end{bmatrix}\succ0,\ \forall \mathcal A_s\in \Sigma(\Psi_s) \subset \mathbb R^{n_s\times m_s}$, \newline where set $\Sigma(\Psi_s)$ has the same definition as in \eqref{def_Sigma} but replacing  $\Psi$ by $\Psi_s$.
\item [(ii)] There exists a scalar $\eta > 0$ such that $\mathcal M_s - \eta  \Psi_s \succ 0$.
\end{enumerate}
\end{lemma}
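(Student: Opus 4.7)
The plan is to establish the two implications separately, with very different effort required for each. The reverse implication (ii) $\Rightarrow$ (i) is immediate: assuming $\mathcal M_s - \eta\Psi_s \succ 0$ for some $\eta>0$, one pre- and post-multiplies this inequality by $\begin{bsmallmatrix}I_{n_s}\\\mathcal A_s^\top\end{bsmallmatrix}^\top$ and its transpose to obtain, for every $\mathcal A_s\in\Sigma(\Psi_s)$,
\begin{equation*}
\begin{bmatrix}I_{n_s}\\\mathcal A_s^\top\end{bmatrix}^\top \mathcal M_s \begin{bmatrix}I_{n_s}\\\mathcal A_s^\top\end{bmatrix} \succ \eta \begin{bmatrix}I_{n_s}\\\mathcal A_s^\top\end{bmatrix}^\top \Psi_s \begin{bmatrix}I_{n_s}\\\mathcal A_s^\top\end{bmatrix} \succeq 0,
\end{equation*}
the last inequality being nothing but the defining constraint of the set $\Sigma(\Psi_s)$. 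No further structure of $\mathcal M_s$ or $\Psi_s$ is needed here, and the Slater-type hypothesis is not used.

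The forward implication (i) $\Rightarrow$ (ii) is the substantive direction. My plan is to argue it through convex duality in the cone of symmetric positive semi-definite matrices. Concretely, I would consider the semi-definite lifting in which the rank-$n_s$ factor $\begin{bsmallmatrix}I\\\mathcal A_s^\top\end{bsmallmatrix}\begin{bsmallmatrix}I\\\mathcal A_s^\top\end{bsmallmatrix}^\top$ is relaxed to a general symmetric $P \succeq 0$ whose $(1,1)$-block is fixed to $I_{n_s}$. The explicit Slater hypothesis of the lemma — the existence of one $\mathcal A_s$ strictly satisfying the quadratic constraint — translates into strict feasibility of the relaxed SDP, so standard strong duality applies and yields a non-negative multiplier $\eta$ associated to the quadratic constraint together with a PSD multiplier for the block-identity constraint. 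Re-assembling these certificates produces the inequality $\mathcal M_s - \eta\Psi_s \succ 0$; strict positivity of $\eta$ is then arranged either by an arbitrarily small perturbation or by noting that if $\eta=0$ works, then any sufficiently small $\eta>0$ works by continuity.

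The main obstacle in this plan is the tightness of the SDP relaxation, namely, showing that every extreme point of the spectrahedron $\{P\succeq 0:\ P_{11}=I_{n_s},\ \langle\Psi_s,P\rangle\succeq 0\}$ admits a factorization compatible with a genuine $\mathcal A_s\in\Sigma(\Psi_s)$. This is precisely where the matrix-valued S-procedure departs from the scalar Yakubovich S-lemma and where the Slater hypothesis is essential: without it, the relaxation can be strictly loose and the certificate $\eta$ need not exist. Once tightness is in hand, the remaining passage from the dual certificates to the final LMI form is purely a matter of book-keeping with Schur complements and does not require additional ideas.
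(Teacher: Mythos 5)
First, a point of reference: the paper does not prove this statement at all. Lemma~\ref{lem:Waarde} is recalled verbatim from \cite[Th.~9]{vanWaarde2020noisy} only to be compared with Lemma~\ref{lem0}, so there is no in-paper proof to measure your attempt against; what follows assesses your plan on its own terms. Your direction (ii)$\Rightarrow$(i) is correct and is the standard one-line congruence argument. The problem lies entirely in the plan for (i)$\Rightarrow$(ii).

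The gap is that the constraint defining $\Sigma(\Psi_s)$ is a \emph{matrix} inequality, $Z^\top\Psi_s Z\succeq 0$ with $Z=\begin{bsmallmatrix}I_{n_s}\\\mathcal A_s^\top\end{bsmallmatrix}$, and this is not a linear function of your lifted variable $P=ZZ^\top$: only its trace, $\mathrm{Tr}(Z^\top\Psi_s Z)=\mathrm{Tr}(\Psi_s P)$, linearizes (your notation ``$\langle\Psi_s,P\rangle\succeq 0$'' already conflates the scalar and matrix constraints). Consequently the feasible set of your relaxed semidefinite program corresponds, even at its rank-$n_s$ points, to the set $\{\mathcal A_s:\mathrm{Tr}(Z^\top\Psi_s Z)\ge 0\}$, which strictly contains $\Sigma(\Psi_s)$ and on which hypothesis (i) gives no information. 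The ``tightness'' you defer to the end is therefore not the familiar rank-exactness of a semidefinite relaxation; it is the assertion that positivity of $Z^\top\mathcal M_s Z$ on the matrix ellipsoid $\Sigma(\Psi_s)$ forces positivity of $\mathrm{Tr}(\mathcal M_s P)$ on the trace-relaxed spectrahedron. That assertion is essentially the whole content of the matrix S-lemma, so the plan relocates the difficulty rather than resolving it. Two secondary issues: even granting tightness, duality for the block constraint $P_{11}=I_{n_s}$ only yields a certificate $\mathcal M_s-\eta\Psi_s\succeq\mathrm{diag}(\Lambda,0)$ with $\mathrm{Tr}\,\Lambda>0$, whose $(2,2)$ block is merely positive semi-definite, so $\mathcal M_s-\eta\Psi_s\succ 0$ does not follow without an extra perturbation/compactness step; and the negative semi-definiteness of the $(2,2)$ block of $\Psi_s$, assumed in the original Theorem~9 and enforced throughout this paper via $\Psi_3\prec 0$, is used nowhere in your argument, which is a warning sign since the equivalence can fail without it. The proof in \cite{vanWaarde2020noisy} instead exploits that under these hypotheses $\Sigma(\Psi_s)$ is a bounded matrix ellipsoid and builds the single scalar multiplier by a dedicated argument, not by a generic lifting.
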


The main similarities and differences with respect to this formulation are described here. Let us first note that both lemmas address the same problem, consisting of the satisfaction of a matrix inequality subject to uncertain matrices characterized by a quadratic constraint. The main interest of both lemmas is to derive equivalent inequalities that are independent of the uncertain matrix $\mathcal A$ (or $\mathcal A_s$). Finally, both lemmas can be seen as application of the usual manipulations on LMI such as Schur Complement, Finsler's lemma and S-procedure. 
\newline
Apart from presenting these similarities, both lemmas have substantial differences. First, Lemma \ref{lem:Waarde} requires that matrix $\mathcal M_s$ has the same size as $\Psi_s$,  that characterizes the quadratic constraint on $\mathcal A$. Lemma~\ref{lem0} is more flexible in this sense, as there is no relationship between matrices $\mathcal M_1$ and $\Psi$, which are independent. This flexibility has the benefit of reducing the initial manipulations to derive, from usual stability or control problems, the appropriate expression of $\mathcal M$ and $\Psi$ to fit the framework of Lemma~\ref{lem:Waarde}. In fact, the relationship between both lemmas can be seen by selecting 
$$
\begin{array}{c}
n_s=q+n ,\quad m_s= m,\quad \mathcal A_s^\top =\begin{bmatrix} 0_{m\times q}& \mathcal A^\top \end{bmatrix},\quad
\mathcal M_s=\begin{bmatrix}
\mathcal M_1 &
\mathcal N_1 
&\mathcal N_2\\
\ast &  \mathcal M_2 &0\\
\ast&\ast&0\\
\end{bmatrix}, \quad \Psi_s= \begin{bmatrix}
0&
0
&0\\
\ast &\ \ \Psi_1 &  \Psi_2\\
\ast&\ast& \Psi_3\\
\end{bmatrix}.
\end{array}
$$

From this selection, it is clear that item (ii) of Lemma \ref{lem:Waarde} is equivalent to item (iii) of Lemma~\ref{lem0}, showing that Lemma~\ref{lem0} is a particular case of Lemma~\ref{lem:Waarde}.
Therefore, it is important to understand the advantages of Lemma~\ref{lem0} with respect to Lemma \ref{lem:Waarde}, since they can be used for the same purpose. The first one is related to the structure of matrix $\mathcal M(\mathcal A)$, which appears in many constructive LMI problems for the stabilization of discrete-time systems. As a first illustration, consider the problem of stabilization of a discrete-time linear system $x^+=Ax+Bu$ using a state-feedback controller $u=Kx$, where we adopted the following notation $x^+ = x_{k+1}$ and $x = x_k$. Using a quadratic Lyapunov function, with a positive definite matrix $P$, the condition for the stabilization of this system writes
$$
(A+BK)^\top P(A+BK)-P\leq0.
$$

Then, as usual in this context, the design of the control gain is obtained by introducing $W=P^{-1}$ and applying the Schur complement to the first term of the previous inequality, leading to the equivalent problem
$$
0\prec\mathcal M(A,B)= \begin{bmatrix}
W\ \ &WA^\top\!\!\! +\!WK^\top B^\top\\
\ast& W
\end{bmatrix}= 
\begin{bmatrix}
W&\begin{bmatrix}W& WK^\top \end{bmatrix} \begin{bmatrix}A^\top \\B^\top\end{bmatrix} \\
\ast& W
\end{bmatrix}.
$$

This usual manipulation drives naturally to the formulation of item (i) in Lemma~\ref{lem0} and, it is easy to identify that for this very simple example
$$
\mathcal M_1\!=\!\mathcal M_2\!=\! W, \quad \mathcal N_1\!=\! 0, \quad  \mathcal N_2\!=\!\begin{bmatrix}
W& WK^\top \end{bmatrix},\quad    \mathcal A \!=\! \left[\begin{matrix}A & B \end{matrix}\right].
$$

Then Lemma~\ref{lem0} leads to the following equivalent formulation
$$
0\prec
\begin{bmatrix}
W\ \ & 0 & \begin{bmatrix}W& WK^\top \end{bmatrix} \\
\ast\ \ & W-\eta \Psi_1\ \ &-\eta \Psi_2\\
\ast&\ast&-\eta \Psi_3
\end{bmatrix},
$$
which can be rewritten as an LMI by introducing a new decision variable $Y=KW$. 

Keeping the same example and following the development presented in \cite{vanWaarde2020noisy} or \cite{berberich2020robust}, the LMI stabilization problem cannot be treated directly. First, a manipulation is required to solve the problem, i.e. to consider the dual or transpose problem that is 
$$
(A+BK) P(A+BK)^\top-P\leq0.
$$

Then, it is possible to apply the equivalence formulation proposed in Lemma~\ref{lem:Waarde}. This manipulation is the key step in the developments provided in \cite{vanWaarde2020noisy}. Note that this manipulation is correct when considering this simple linear time-invariant example, since both LMIs are necessary and sufficient conditions for matrix $A+BK$ to be Schur stable and consequently for the transpose matrix $(A+BK)^\top$ to be Schur stable as well. However, this manipulation is not permitted any more, or at least has to be studied carefully when other classes of systems are considered such as systems subject to nonlinearities, saturations, etc.

In order to avoid working on the transpose matrix, one may apply the Schur complement and work on the same inequality $\mathcal M_s(A,B)=\mathcal M(A,B)$. However, by doing this, the problem fits exactly to the structure of the inequality presented in item (i) of Lemma~\ref{lem0}, for which no additional manipulations are further needed.

Even though this example may seem too simple, a more complicated stabilization problem will be considered in the next section on switched affine systems. In this situation, one can better understand the potential and simplicity of the proposed formulation. 
 This is the main motivation to use Lemma~\ref{lem0}.

Another important issue is related to item (ii) in Lemma~\ref{lem0}, which introduces two slack variables, namely $\mathcal R$ and $\mathcal N$. In light of item (iii), these slack variables are not needed, in general. However, there exist at least two situations in which these additional degrees of freedom could be useful. The first refers to the case where matrices in $\mathcal M(\mathcal A)$ are subject to uncertainties. A second case of interest appears in the context of an optimization problem, where these slack variables may ease the search for the optimal solution, as we will show in the latter example section.

\section{Robust hybrid cycles for perturbed switched affine systems}

In this section, the objective is to present constructive stabilization conditions for switched affine systems subject to a bounded external disturbance. After formulating the problem under consideration, several preliminaries on robust hybrid cycles and cycles for this class of systems will be documented. Then, the first contribution of this paper on the robust model-based stabilization of switched affine systems will be presented.

\subsection{System data}
\label{sec:prob}

Consider the discrete-time switched affine system governed by the following dynamics.
\begin{align}\label{eq:model_x}
	\left\{
		\begin{array}{l} 
			x^+ = A_{\sigma} x + B_{\sigma}+w,\\
			\sigma \in u(x) \subset \mathbb K,\\
			x_0\in \mathbb R^n,
		\end{array}
	\right.
\end{align}
where $ x \in \real^n$ is the state vector. \textcolor{blue}{At any time instant $k\in\mathbb N$, $x$ and $x^+$ stand for $x(k)$ and $x(k+1)$, respectively. In \eqref{eq:model_x}, the time argument $k\in\mathbb N$ is omitted for the sake of simplicity.}
Likewise, $\sigma \in \kset:=\{1,2,..,K\}$ characterizes the active mode. The dynamics of the system is affected by an external disturbance vector $w \in\mathbb R^n$ that verifies
\begin{equation}\label{eq:cond_w}
    w^\top w\leq \lambda^2,
\end{equation}
for some given positive real number $\lambda$. Finally, $A_j\in \mathbb R^{n\times n}$ and $B_j \in \mathbb R^{n\times 1}$ are the matrices of mode $j\in\mathbb K$, which are not necessarily constant and known but it is assumed that they possibly belong to a polytopic set of uncertainties. The particularity of this class of systems relies on its control action, which is only performed by selecting the active mode $\sigma$, which requires particular attention.

The objective here is to design a suitable set-valued map $u$ in system~\eqref{eq:model_x} that ensures the convergence of the state trajectories to a set to be characterized accurately. {\color{blue} Note that the property of $u$ of being a set-value map comes from the fact that $u\in\mathbb{K}$.} A model-driven solution to this problem was provided in \cite{serieye2022attractor} and will be summarized hereafter. Interestingly, this paper provides a solution for robust stabilization of switched affine systems, which consists of proving that the solutions to the closed-loop system converge to a robust limit cycle which is composed of the union of shifted ellipsoid. This solution will be recalled in the next section. 

This formulation paves the way for the problem of the data-driven design of stabilizing controllers for switched affine systems. Let us first formulate the problem of data-driven control.

\subsection{Cycles and robust limit cycles}

Following the ideas developed in \cite{serieye2022attractor}, the notion of limit cycles~\cite{Strogatz_Book_1994,Sun_CSF_2008} is adapted to the problem under consideration. Before presenting the concept of robust limit cycles, let us first introduce the following definitions.
%
\begin{definition}[Cycle]  
A cycle, denoted as $\nu$, is a periodic function from $\mathbb N$ to $\mathbb K$. More precisely, this means that there exists $N$ in $\mathbb N\backslash\{0\}$, such that 
$$\nu(i+N)= \nu(i),\quad \forall i \in \mathbb N.$$

For any cycle $\nu$, notations $N_\nu$ and $\mathbb D_\nu$ stand for the minimum period and the minimum domain of $\nu$, respectively. More formally, they are defined as follows
$$
\begin{array}{rcl}
			N_\nu&:= &\min {N\in \mathbb N\backslash\{0\}}  \mbox{ s.t. }\nu(i+N)= \nu(i) \in \mathbb K,\quad  \forall i \in \mathbb N,\\
			\mathbb D_\nu&:=&\{1,2,\dots, N_\nu\}.
\end{array}
$$
\end{definition}
\begin{definition}[Set of cycles]
	Denote the set of cycles from $\mathbb N$ to $\mathbb K$ by 
	\begin{equation*}\label{eq:def_cycle}
		\calc:=\left\{\nu : \mathbb {N} \rightarrow \kset,\ \mbox{s.t.}\ \exists N \in \mathbb N\backslash\{0\},\quad  \forall i\in \mathbb {N},\ \nu(i+N)=\nu(i)
		\right\}.
	\end{equation*}	
\end{definition}
To ease readability, we introduce the following modulo notation: $\lfloor i\rfloor_{\nu}=((i-1) \textrm{ mod }  N_\nu)+1$, for any $i\in\mathbb N\backslash \{0\}$. In particular, $\lfloor i\rfloor_{\nu}=i$, for any $i=1,\dots,N_\nu$ and $\lfloor N_\nu+1\rfloor_{\nu}=1$. The notion of robust limit cycles, which extends the definition of limit cycles in~\cite{Strogatz_Book_1994,Sun_CSF_2008}, is adapted to discrete-time systems and is defined below. 
\begin{definition}[Robust Limit Cycles]
System \eqref{eq:model_x}  admits a robust limit cycle associated with a cycle $\nu\in\mathcal C$ if there exist possibly disjoint subsets $\mathcal L_i \subset \mathbb R^n$,  for $i\in\mathbb D_\nu$ such that 
\begin{equation}\label{Def:robustLimitCycle}
		A_{\nu(i)} \mathcal L_i + B_{\nu(i)} \subset  \mathcal L_{\lfloor i\!+\!1\rfloor_\nu}, \ \ \forall i \in \mathbb D_\nu.
\end{equation}
\end{definition}
In inclusion \eqref{Def:robustLimitCycle}, the left-hand side of the inclusion means, with a light abuse of notation, that, for any $ i\in\mathbb D_\nu$ and for all $x\in \mathcal L_i$,  vector $A_{\nu(i)} x+B_{\nu(i)}$ belongs to $\mathcal L_{\lfloor i+1\rfloor_\nu}$. In the sequel,  Figure~\ref{fig:Rob_Lim_Cycle} illustrates this set of inclusions and the relationship with the cycle.
\begin{figure}
		\centering
		       \hspace{-0.15cm} \includegraphics[width=0.25\textwidth]{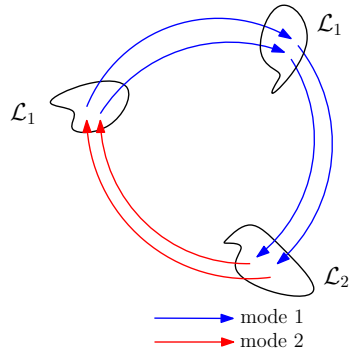}
	\caption{Illustration of a robust limit cycle composed of three subsets and associated to the cycle $\nu=\{1,1,2\}$. Inclusions \eqref{Def:robustLimitCycle} impose the invariance of the trajectories inside the robust limit cycle.}
		\label{fig:Rob_Lim_Cycle}
	\end{figure}
If, in some cases, the subsets composing the robust limit cycles are reduced to singletons, i.e. $\mathcal L_i=\{\rho_i\}$, for all $i\in\mathbb D_\nu$, then inclusion \eqref{Def:robustLimitCycle} is rewritten as a set of equalities given by
\begin{equation}\label{def:cycleq}
		\rho_{\lfloor i+1\rfloor_{\nu}}=A_{\nu(i)} \rho_{i} + B_{\nu(i)},\quad \forall i\in\mathbb D_\nu,
\end{equation}
illustrating then that inclusions \eqref{Def:robustLimitCycle} are the natural extension of \eqref{def:cycleq}.

Note that the idea of studying limit cycles for switched affine systems has been considered in \cite{egidio2020global} for the case of constant and known matrices $A_j$ and $B_j$. In this case, necessary and sufficient constructive conditions for the existence of $\{\rho_i\}_{i\in\mathbb D_\nu}$, for a given cycle $\nu$, have been provided in \cite{serieye2022attractor}, while it was only an assumption in \cite{egidio2020global}. It is also worth noting that the structure of the control law provided in \cite{egidio2020global} complicates the study of robust limit cycles since their control law requires exact knowledge of the system matrices.

\subsection{Robust model-based stabilization of switched affine systems}

A solution to the stabilization of switched affine systems subject to external disturbances to a robust limit cycle is presented here following the idea and concept borrowed from \cite{serieye2022attractor}. The main difference with respect to \cite{serieye2022attractor} is the addition of the bounded external disturbance in \eqref{eq:model_x}. The robust stabilization problem is formalized as follows.

\begin{theorem}\label{th:theo_robust0}
	For a given cycle $\nu$ in $\calc$ and for a parameter $\mu \in \left(0,1\right)$, consider the solution $\{(W_i,\zeta_i,\delta_i)\}_{i\in \mathbb D_\nu}$ in $\mathbb{S}^{n}\times \mathbb R^{n}\times \mathbb R$ to the following problem. 
	\begin{equation}\label{eq:th_LMI_rob}
	\begin{array}{rcl}
	\Phi_i (A_{\nu(i)},B_{\nu(i)}) \succ0, \quad W_i\succ 0, \quad \delta_i >0, \quad \forall i\in \mathbb D_\nu,
	\end{array}
	\end{equation}
	where matrices $\Phi_i$ are defined for all $i$ in $\mathbb D_{\nu}$ as follows.
	\begin{equation}\label{eq:WiPsi0}
	\begin{array}{lclllcl}
		\Phi_i (A_{\nu(i)},\!B_{\nu(i)}) &\!=&
		\!\begin{bmatrix}
		    	(1 \!-\! \mu)W_i &\!0  & 0& \!W_i A_{\nu(i)}^\top \\
		    	\ast &\! \mu \!-\!\delta_i \lambda^2&\!0 &\zeta_i^\top A_{\nu(i)}^\top \!+\!B_{\nu(i)}^\top\!-\!\zeta_{\lfloor i+1\rfloor_\nu}^\top \\    
		    	\ast &\!\ast &\! \delta_i  I&I\\
		    	\ast &\! \ast &\!\ast &\! W_{\lfloor i+1\rfloor_\nu} 
		    \end{bmatrix}
		   \end{array}.
	\end{equation}
	Then, attractor 	\begin{equation}\label{def:S0}
			\mathcal{S}_\nu:=\bigcup_{i\in\mathbb D_\nu}\mathcal E( W_i^{-1}, \zeta_i) 
		\end{equation}
		with $$\mathcal E( W_i^{-1}, \zeta_i)=\left\lbrace x \in \mathbb{R}^n, i\in\mathbb D_\nu \big| (x-\zeta_i)^\top W^{-1}_i(x-\zeta_i) \leq 1 \right\rbrace,$$
		is robustly globally exponentially stable for system~\eqref{eq:model_x} with the disturbance signal \eqref{eq:cond_w} and with
		the switching control law
		\begin{equation}\label{eq:control_law_robust0}
			u(x) = \left\{ \nu\left( \theta \right),~\theta\in\underset{i\in\mathbb D_\nu}{\argmin}\left(x-\zeta_i\right)^\top W_i^{-1}\left(x-\zeta_i\right) \right\}\subset \mathbb K.
		\end{equation}
\end{theorem}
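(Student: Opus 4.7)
The plan is to exhibit a Lyapunov-type function adapted to the union of shifted ellipsoids $\mathcal{S}_\nu$ and to show that the LMI \eqref{eq:th_LMI_rob}, combined with the disturbance bound \eqref{eq:cond_w} and the $\argmin$-based mode selection \eqref{eq:control_law_robust0}, forces this function to contract towards the value $1$, which is precisely the boundary level of $\mathcal{S}_\nu$. This will simultaneously yield forward invariance and global exponential attractivity of $\mathcal{S}_\nu$, with rate governed by $(1-\mu)$.

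Introduce the mode-indexed quadratic forms $V_i(x) = (x-\zeta_i)^\top W_i^{-1}(x-\zeta_i)$ for $i \in \mathbb{D}_\nu$, and set $V(x) = \min_{i\in\mathbb{D}_\nu} V_i(x)$. By construction, $V(x) \leq 1$ if and only if $x \in \mathcal{S}_\nu$. Fix $i \in \mathbb{D}_\nu$, write $j = \lfloor i+1 \rfloor_\nu$, and introduce the shifted coordinates $y = x-\zeta_i$ and $y^+ = x^+-\zeta_j$. The dynamics \eqref{eq:model_x} with $\sigma = \nu(i)$ yield
\begin{equation*}
y^+ \,=\, A_{\nu(i)}\, y + \bigl(A_{\nu(i)}\zeta_i + B_{\nu(i)} - \zeta_j\bigr) + w,
\end{equation*}
where the affine constant is exactly the transpose of the $(2,4)$-block of $\Phi_i$.

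The main computation is to test the positivity of $\Phi_i(A_{\nu(i)},B_{\nu(i)})$ against the vector $v = \bigl[(W_i^{-1} y)^\top,\, 1,\, w^\top,\, -(W_j^{-1} y^+)^\top\bigr]^\top$. The four diagonal blocks contribute $(1-\mu) y^\top W_i^{-1} y$, $\mu - \delta_i \lambda^2$, $\delta_i w^\top w$ and $(y^+)^\top W_j^{-1} y^+$, while the three off-diagonal blocks in the last column, once expanded and summed, combine with the dynamics identity $A_{\nu(i)} y + (A_{\nu(i)}\zeta_i + B_{\nu(i)} - \zeta_j) + w = y^+$ to produce a single cross-term $-2\,(y^+)^\top W_j^{-1} y^+$. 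Collecting everything, the strict inequality $v^\top \Phi_i v > 0$ reads
\begin{equation*}
(y^+)^\top W_j^{-1} y^+ \,<\, (1-\mu)\, y^\top W_i^{-1} y \,+\, \mu \,+\, \delta_i\bigl(w^\top w - \lambda^2\bigr),
\end{equation*}
which is the expected S-procedure contraction: the slack $\delta_i>0$ absorbs \eqref{eq:cond_w} and $\mu\in(0,1)$ sets the rate. Since $\delta_i(w^\top w - \lambda^2) \leq 0$ by \eqref{eq:cond_w}, it follows that $V_j(x^+) < (1-\mu) V_i(x) + \mu$.

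To close the argument, apply this inequality at $i = i^\ast(x) \in \argmin_k V_k(x)$, i.e.\ the index selected by the controller \eqref{eq:control_law_robust0}, so that $V(x) = V_{i^\ast(x)}(x)$. Using the minimum property, $V(x^+) = \min_k V_k(x^+) \leq V_{\lfloor i^\ast(x)+1\rfloor_\nu}(x^+)$, whence
\begin{equation*}
V(x^+) - 1 \,<\, (1-\mu)\bigl(V(x) - 1\bigr)
\end{equation*}
for every $x\in\mathbb R^n$ and every admissible $w$. Iterating this comparison gives $V(x_k) - 1 < (1-\mu)^k \bigl(V(x_0)-1\bigr)$ whenever $V(x_0) > 1$, which is the robust global exponential convergence of trajectories to $\mathcal{S}_\nu$; evaluating the same inequality at $V(x) \leq 1$ yields $V(x^+) < 1$, i.e.\ forward invariance of $\mathcal{S}_\nu$. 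The principal technical obstacle is the bookkeeping in the expansion of $v^\top \Phi_i v$: one must carefully track the six off-diagonal contributions and recognize that they telescope into $-2(y^+)^\top W_j^{-1} y^+$ via the affine dynamics, \textit{before} the sign of $\Phi_i$ is exploited.
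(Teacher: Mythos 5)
Your proposal is correct and follows essentially the same route as the paper: the min-of-shifted-quadratics Lyapunov function, the bound of the minimum by the successor index $\lfloor\theta+1\rfloor_\nu$, and the S-procedure terms in $\mu$ and $\delta_i$ encoded in the diagonal of $\Phi_i$. The only (algebraically equivalent) difference is that you evaluate $\Phi_i\succ0$ directly on the augmented vector whose last block $-W_j^{-1}y^+$ is precisely the Schur-complement minimizer, instead of first Schur-complementing to a $3\times 3$ condition as the paper does, and your single inequality $V(x^+)-1<(1-\mu)(V(x)-1)$ neatly delivers both the attractivity and the forward-invariance of $\mathcal S_\nu$, which the paper establishes in two separate steps.
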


\begin{proof} The proof of this theorem is largely inspired by \cite{serieye2022attractor}. This theorem is demonstrated thanks to the Lyapunov function built with matrices $W_i^{-1}\succ0$ and vectors $\zeta_i$'s that are the decision variables of  \eqref{eq:WiPsi0} and given by 
\begin{equation}
V(x) =  \min_{i\in\mathbb D_\nu}(x-\zeta_i) ^\top W^{-1}_i(x-\zeta_i).
\end{equation}	

The forward increment of the Lyapunov function writes
\begin{equation}
\begin{array}{lcl}
\Delta V(x) &=& \displaystyle \min_{i\in\mathbb D_\nu}(x^+\!-\!\zeta_i) ^\top W^{-1}_i(x^+\!-\!\zeta_i)
\!-\!\min_{\iota\in\mathbb D_\nu}(x\!-\!\zeta_\iota) ^\top W^{-1}_\iota(x\!-\!\zeta_\iota)\\
&=& \displaystyle \min_{i\in\mathbb D_\nu}(x^+\!-\!\zeta_i) ^\top W^{-1}_i(x^+\!-\!\zeta_i)\!-\!(x\!-\!\zeta_\theta) ^\top W^{-1}_\theta(x\!-\!\zeta_\theta).
\end{array}
\end{equation}	

The last equation holds since $\theta$ is the value of $\mathbb D_\nu$ that minimizes $(x\!-\!\zeta_\iota) ^\top W^{-1}_\iota(x\!-\!\zeta_\iota)$ according to the control law \eqref{eq:control_law_robust0}. Furthermore, we have
$$
\min_{i\in\mathbb D_\nu}(x^+\!-\!\zeta_i) ^\top W^{-1}_i(x^+\!-\!\zeta_i)\leq (x^+\!-\!\zeta_{\lfloor \theta+1\rfloor_\nu} ) ^\top W^{-1}_{\lfloor \theta+1\rfloor_\nu}(x^+\!-\!\zeta_{\lfloor \theta+1\rfloor_\nu})
$$
since the minimum is always lower than or equal to any other term. This expression only considers a particular case, which is $i={\lfloor \theta+1\rfloor_\nu}$. All together, an upper bound of the increment of the Lyapunov function is derived as follows
\begin{equation}
\begin{array}{lcl}
\Delta V(x) \!\leq \! \displaystyle (x^+\!\!-\!\zeta_{\lfloor \theta\!+\!1\rfloor_\nu}\! ) ^{\!\top} W^{-1}_{\lfloor \theta\!+\!1\rfloor_\nu}\!(x^+\!\!-\!\zeta_{\lfloor \theta+1\rfloor_\nu}\!)\!-\!(x\!-\!\zeta_\theta) ^{\!\top} W^{-1}_\theta(x\!-\!\zeta_\theta).
\end{array}
\end{equation}	

Next, we note that
$$
x^+-\zeta_{\lfloor \theta+1\rfloor_\nu}
			= A_{\nu(\theta)} (x-\zeta_\theta) +  \underbrace{A_{\nu(\theta)} \zeta_\theta+ B_{\nu(\theta)} - \zeta_{\lfloor \theta+1\rfloor_\nu}}_{\mathcal{B}_{\nu\left(\theta\right)}}+w ,
$$
where $ \mathcal{B}_{\nu\left(\theta\right) }$ is not necessarily zero. Let us now introduce $\chi_\theta^\top = [ (W^{-1}_\theta(x - \zeta_{\theta}))^\top \ 1\ \omega^\top ]$, so that 
$$
\begin{array}{lcl}
x - \zeta_{\theta} &=& [ W_\theta  \ \ \ 0\ \ \ 0]\chi_\theta\\
x^+ - \zeta_{\lfloor \theta+1\rfloor_\nu} &=& [ A_{\nu\left(\theta\right)} W_\theta  \ \ \ \mathcal B_{\nu(\theta)}\ \ \ I]\chi_\theta.
\end{array}
$$

Using this notation, the increment of the Lyapunov function can be expressed as follows:
\begin{equation}
\begin{array}{lcl}
\Delta V(x) \!\leq \! \displaystyle \chi_\theta^\top\left(
\begin{bsmallmatrix} W_\theta A_{\nu\left(\theta\right)} ^\top  \\ \mathcal B_{\nu(\theta)}^\top\\ I\end{bsmallmatrix}  W^{-1}_{\lfloor \theta\!+\!1\rfloor_\nu}
\begin{bsmallmatrix} W_\theta A_{\nu\left(\theta\right)} ^\top  \\ \mathcal B_{\nu(\theta)}^\top\\ I\end{bsmallmatrix}^\top 
-
\begin{bsmallmatrix} W_{ \theta}  \\ 0\\ 0\end{bsmallmatrix}  W_{ \theta}^{-1}\begin{bsmallmatrix} W_{ \theta}  \\ 0\\ 0\end{bsmallmatrix}^\top\right)\chi_\theta .
\\
\end{array}
\end{equation}

In addition to the previous inequality, we need to include the constraint on the external disturbance \eqref{eq:cond_w} as well as the fact that we will require convergence to the attractor $\mathcal S_\nu$, i.e., for all $x\in \mathbb R^n$ such that $V(x)\leq 1$. Using two S-procedures, this means that there exist two positive scalars $\mu>0$ and $\delta>0$, such that 
\begin{equation}
\begin{array}{lcl}
\Delta V(x) &\leq& \displaystyle \chi_\theta^\top\left(
\begin{bsmallmatrix} W_\theta A_{\nu\left(\theta\right)} ^\top  \\ \mathcal B_{\nu(\theta)}^\top\\ I\end{bsmallmatrix}  W^{-1}_{\lfloor \theta\!+\!1\rfloor_\nu}
\begin{bsmallmatrix} W_\theta A_{\nu\left(\theta\right)} ^\top  \\ \mathcal B_{\nu(\theta)}^\top\\ I\end{bsmallmatrix}^\top 
-\begin{bsmallmatrix}  W_{ \theta}  \\ 0\\ 0\end{bsmallmatrix}  W_{ \theta}^{-1}\begin{bsmallmatrix} W_{ \theta}  \\ 0\\ 0\end{bsmallmatrix}^\top\right)\chi_\theta
-\mu(V(x)-1)-\delta(\lambda^2-w^\top w)\\
&=& \displaystyle \chi_\theta^\top\left(
\begin{bsmallmatrix} W_\theta A_{\nu\left(\theta\right)} ^\top  \\ \mathcal B_{\nu(\theta)}^\top\\ I\end{bsmallmatrix}  W^{-1}_{\lfloor \theta\!+\!1\rfloor_\nu}
\begin{bsmallmatrix} W_\theta A_{\nu\left(\theta\right)} ^\top  \\ \mathcal B_{\nu(\theta)}^\top\\ I\end{bsmallmatrix}^\top 
-\begin{bsmallmatrix}  (1-\mu) W_{ \theta}  &0 &0 \\ \ast & \mu-\delta\lambda^2& 0\\ \ast&\ast & \delta I\end{bsmallmatrix}  \right)\chi_\theta
\end{array}
\end{equation}

Applying the Schur complement leads to the expression of $\Phi_i(A_{\nu(i)},B_{\nu(i)})$.
If all matrices $\Phi_i(A_{\nu(i)},B_{\nu(i)})$ in \eqref{eq:WiPsi0} are positive definite, then the increment of the Lyapunov function is negative definite outside of $\mathcal S_\nu$.

\textcolor{blue}{To show that $\mathcal S_\nu$ is also an invariant set of the closed-loop system \eqref{eq:model_x}, it suffices to write  $V(x^+)=V(x)+\Delta\! V(x) $.  Then, enforcing the introduction of the terms employed in the S-procedure during the first step of the proof, we get
$$
\begin{array}{lcl}
V(x^+)&\!=&\!V(x)-( \mu(V(x)\!-\!1)+\delta(\lambda^2-w^\top w) )+ \Delta\! V(x) +( \mu(V(x)\!-\!1)+\delta(\lambda^2-w^\top w) )
\end{array}
$$}

\textcolor{blue}{Following the previous developments, the previous expression writes
$$
\begin{array}{lcl}
V(x^+)&\!=&\!V(x)-( \mu(V(x)\!-\!1)+\delta(\lambda^2-w^\top w) )
\displaystyle +\chi_\theta^\top\left(
\begin{bsmallmatrix} W_\theta A_{\nu\left(\theta\right)} ^\top  \\ \mathcal B_{\nu(\theta)}^\top\\ I\end{bsmallmatrix}  W^{-1}_{\lfloor \theta\!+\!1\rfloor_\nu}
\begin{bsmallmatrix} W_\theta A_{\nu\left(\theta\right)} ^\top  \\ \mathcal B_{\nu(\theta)}^\top\\ I\end{bsmallmatrix}^\top 
-\begin{bsmallmatrix}  (1-\mu) W_{ \theta}  &0 &0 \\ \ast & \mu-\delta\lambda^2& 0\\ \ast&\ast & \delta I\end{bsmallmatrix}  \right)\chi_\theta
\end{array}
$$}

\textcolor{blue}{Since all matrices $\Phi_i(A_{\nu(i)},B_{\nu(i)})$ in \eqref{eq:WiPsi0} are positive definite, the last term of the previous expression is negative definite (by application of the Schur Complement).  This implies that the following inequality holds 
$$
\begin{array}{lcl}
V(x^+)&\!\leq &\!V(x)- \mu(V(x)\!-\!1)-\delta(\lambda^2-w^\top w) \leq \!V(x)- \mu(V(x)\!-\!1)=(1-\mu)V(x) -\mu
\end{array}
$$
where the last inequality holds because $w^\top w\leq \lambda^2$ holds by assumption. The proof is concluded by recalling that $V(x)\leq1$ and $\mu\in (0,1)$, which ensure $V(x^+)\leq \!1\!-\!\mu \!+\! \mu\leq1$.
}

\end{proof}

Before going further on the optimization procedure or the extension to the data-driven case, several comments should be properly stated to highlight the contributions of Theorem~\ref{th:theo_robust0} with respect to the existing literature on switched affine systems.

First, it should be noted that the LMI conditions presented in Theorem~\ref{th:theo_robust0} are affine and consequently convex with respect to the system matrices $(A_j,B_j)_{j \in\mathbb K}$. This structure allows for a direct extension to the case of uncertain mode matrices. The previous theorem ensures that ellipsoids $\mathcal {E}(W_i^{-1},\zeta_i)$ verify inclusions \eqref{Def:robustLimitCycle} that characterize robust limit cycles (as shown in Fig.~\ref{fig:robust_stabilization}), whenever matrices $(A_j,B_j)_{j \in\mathbb K}$ are known and constant or not. When comparing the first and recent paper on the convergence of switched affine systems to limit cycles \cite{egidio2020global}, the control law proposed therein is a time-varying state feedback that requires exact knowledge of the system matrices. Therefore, this method is not applicable to the problem under consideration and, more generally, it is impossible to extend the solution provided in \cite{egidio2020global} to the case of uncertain systems. 
\begin{figure}
		\centering
		       \hspace{-0.35cm} \includegraphics[width=0.4\textwidth]{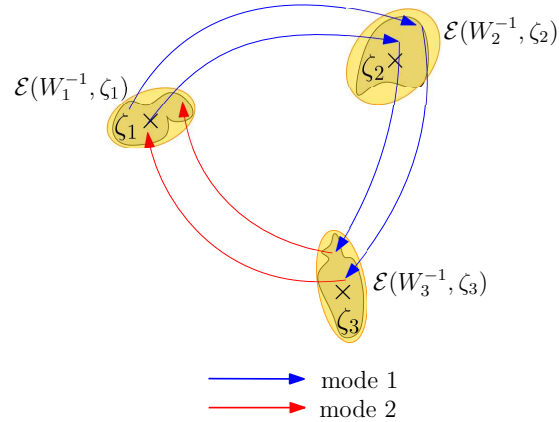}		  
	\caption{Outer estimation of the robust limit cycle thanks to attractor $\mathcal S_\nu$ expressed as the union of shifted ellipsoids characterized by $\{(W_i,\zeta_i)\}_{i\in \mathbb D_\nu}$, the solution to the LMI problem~\eqref{eq:th_LMI_rob}.}
		\label{fig:robust_stabilization}
	\end{figure}
Furthermore, the added value of the present contribution with respect to \cite{serieye2022attractor} refers to the inclusion of an external disturbance $w$, which is assumed to be bounded but not vanishing as time increases. This external noise affects the size of the ellipsoids composing the attractor in the sense that the larger the amplitude of the noise, the larger the size of the attractor.

In addition, the authors of \cite{egidio2020global} have also considered the situation of switched affine systems subjected to an external disturbance. However, the disturbance function is assumed to be $L_2$, so that an $L_2$ performance analysis was performed. This is a stronger assumption compared to the paper in hand. Indeed, having bounded but not vanishing disturbances prevents stabilization to a limit cycle (i.e., the union of singletons) but allows it in a robust limit cycle (i.e. the union of ellipsoids). The case of $L_2$ performance analysis can be easily processed using the usual LMI techniques and is therefore not provided in this paper. 

It should be noted that if the matrices of all modes are constant and known, the decision variables $\zeta_i$'s can be replaced by $\rho_i$' s, the solution to \eqref{def:cycleq}. Then, the off-diagonal block $(2,4)$ and $(4,2)$ of $\Phi(A_{\nu(i)},B_{\nu(i)})$ equal to zero, by definition, and several simplifications of the conditions can be performed. Indeed, the application of the Schur complement twice, condition $\Phi(A_{\nu(i)},B_{\nu(i)})\succ 0$ is equivalent to 
$$
A_{\nu(i)}W_i A_{\nu(i)}^\top - (1\!-\!\mu) W_{\lfloor i+1\rfloor_\nu} +\tilde \delta_i I \prec 0,\ \tilde \delta_i \!>\! \frac{(1 \!-\! \mu)\lambda^2}{\mu},\ \forall i\in\mathbb D_\nu,
$$
where $\tilde \delta_i$ is a new decision variable that stands for $(1-\mu) \delta_i^{-1}$ and where we recall that $\mu$ is a tuning parameter that must be fixed a priori.

Finally, a deeper discussion of the stabilization to robust limit cycles conditions has been provided in \cite{serieye2022attractor}, going beyond the results provided in \cite{egidio2020global}, such as, for instance, the necessary and sufficient conditions for the existence of a limit cycle and the comparison between the time-varying state feedback controller and the pure state feedback law \eqref{eq:control_law_robust0}. In addition, as in \cite{egidio2020global}, a general optimization problem has been presented thanks to the definition of a generic cost function, which aims to characterize the distance of the attractors to a desired reference, the amplitude of the limit cycles, etc. This problem allows for selecting the optimal cycle among a set of possible cycles that minimizes this cost function. This discussion is not presented in this paper to avoid repetition with \cite{serieye2022attractor}. 

\textcolor{blue}{
\begin{remark}
The complexity of the LMI condition of Theorem \ref{th:theo_robust0} relies on the number of decisions variables ($(n(n+1)/2+n+1) N_\nu$) and on the dimension of the condition ($(3n+1) N_\nu$). Noting that  the length of the limit cycle under consideration, $N_\nu$, is limited to $10$ in practice, the complexity of the LMI of Theorem \ref{th:theo_robust0} is reasonable.
\end{remark}
}

\section{Data-driven control for switched affine systems}

\subsection{Measurement noise and modelling of uncertainties}

Unlike in \cite{serieye2022attractor}, the matrices that define the modes of \eqref{eq:model_x} are not assumed to be known here. Only some experimental data are available and will be used to design the control law. Following the method presented in \cite{berberich2020robust,van2020data}, we define the following matrices: 
\begin{equation}\label{def:X_j}
\begin{array}{lcllcl}
X^+_j&:=&\begin{bmatrix} x^+_{j,1}& x^+_{j,2}&\dots& x^+_{j,p_j}
\end{bmatrix}, &\quad 
X_j&:=&\begin{bmatrix} x_{j,1}& x_{j,2}&\dots& x_{j,p_j}
\end{bmatrix}.
\end{array}
\end{equation}

Matrices $X_j^+\in\mathbb R^{n\times p_j}$ and $X_j\in\mathbb R^{n\times p_j}$ collect all the data from the experiments obtained for several initial conditions $x_{j,\ell}$. Subscripts `$j$' and `$\ell$' in $x_{j,\ell}$ refer to the mode under consideration and the index of the experiment, respectively. Notation $\ell$ does not necessarily refer to time here. In fact, experiments can be built using arbitrary vectors $x_{j,\ell}$, or selecting them such that $x^+_{j,\ell}=x_{j,\ell+1}$ for all $\ell=1,\dots,p_j-1$. \newline An experiment $(x^+_{j,\ell},x_{j,\ell})$ verifies the following equation
$$
x^+_{j,\ell}=A_jx_{j,\ell}+B_j +w_{j,\ell},\quad \forall \ell \in [1,p_j]_\mathbb N,\quad \forall   j\in \mathbb K.
$$ 

Differently from the usual linear case \cite{berberich2020robust,berberich2020combining,vanWaarde2020noisy,van2020data}, one has to differentiate the experiments performed for each mode $j$ in $\mathbb K$.  In the previous equation, the experiments have been corrupted by the measurement noise represented by the additional vector $w_{j,\ell}$, for all $j,\ell$ in $\mathbb K\times[1,p_j]_\mathbb N$ that is also gathered in matrix $\omega_j$ defined as follows
$$\omega_j=\begin{bmatrix} w_{j,1}& w_{j,2}&\dots& w_{j,p_j}
\end{bmatrix}\in\mathbb R^{n\times p_j}.$$ 

Summing up all these ingredients, the experimental data verify the following equation.
\begin{equation}\label{eq:info_data}
X^+_j=A_j X_j+B_j \mathbf{1}^\top_{p_j} +\omega_j, \quad \forall j\in \mathbb K.
\end{equation}

Again, several definitions inspired by \cite{van2020data} need to be stated to properly formulate the control problem.

\begin{definition}\cite{van2020data}\label{def:Y_j}
Consider the data matrix $Y_j$ associated to each mode defined as follows
\begin{equation}\label{eq:def:Y_j}
Y_j = \begin{bmatrix}
I_n&X_j^+\\
0&-X_j\\
0 & -\mathbf{1}_{p_j}^\top
\end{bmatrix}, \quad \forall j \in \mathbb K.
\end{equation}
\end{definition}

Matrices $Y_j$ are called \textit{data matrices} and will be of fundamental interest throughout the paper. Its relevance is demonstrated by noting that
 \begin{equation}\label{Eq:UncMat-Y}
  \left[\begin{matrix}
I_n& A_j &B_j
\end{matrix}\right]Y_j=  \left[\begin{matrix}
I_n& \omega_j\end{matrix}\right].
 \end{equation}
 
Indeed it reflects the relationship between the uncertain matrices of the system, i.e. $A_j,B_j$, and the measurement noise associated with these experiments, i.e. $\omega_j$. Equation \eqref{Eq:UncMat-Y} also provides some information on the relevance of the information. Essentially, matrices $Y_j$ need to include a sufficiently large number of experiments in order to be used in the control design phase. This intuition is formalized in the next definition taken from \cite{vanWaarde2020noisy,van2020data} on the informative nature of matrices $Y_j$ \cite{berberich2020robust,berberich2020data}. 
 \begin{definition}\label{def:informativity}
A matrix $Y_j$ given in \eqref{eq:def:Y_j} is said to be informative if $Y_jY_j^\top$ is nonsingular (or equivalently positive definite).  
 \end{definition}
\begin{definition}
Data $\{X_j\}_{j\in\mathbb K}$ are called persistently
exciting for the switched affine system~\eqref{eq:model_x} if matrices $\begin{bsmallmatrix}
X_j\\ \mathbf{1}_{p_j}^\top\end{bsmallmatrix}$ have full row rank.
\end{definition}

\begin{lemma}
Matrices $\{Y_j\}_{j\in\mathbb K}$ are informative only if and only if $\{X_j\}_{j\in\mathbb K}$ are persistently exciting for the switched affine system~\eqref{eq:model_x}.
\end{lemma}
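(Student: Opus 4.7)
The plan is to show that $Y_j Y_j^\top$ being positive definite is equivalent to $Y_j$ having full row rank, and then to reduce this row-rank condition to the persistent excitation condition by exploiting the block structure of $Y_j$.

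First I would recall the standard fact that for any real matrix $M$, the product $MM^\top$ is positive definite if and only if $M$ has full row rank. Thus $Y_j$ is informative in the sense of Definition \ref{def:informativity} if and only if the $(2n+1) \times (n+p_j)$ matrix $Y_j$ has row rank $2n+1$.

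Next, I would analyze the row-rank of $Y_j$ directly via its block structure. Suppose $[v_1^\top, v_2^\top, v_3] \, Y_j = 0$ with $v_1,v_2 \in \mathbb R^n$ and $v_3 \in \mathbb R$. Looking at the first $n$ columns of $Y_j$, which form the block $\begin{bsmallmatrix} I_n \\ 0 \\ 0 \end{bsmallmatrix}$, one immediately obtains $v_1^\top = 0$. Substituting this back, the remaining $p_j$ columns yield
$$
v_2^\top X_j + v_3\, \mathbf{1}_{p_j}^\top = 0,
$$
which can be rewritten as $[v_2^\top \;\; v_3] \begin{bsmallmatrix} X_j \\ \mathbf{1}_{p_j}^\top\end{bsmallmatrix} = 0$. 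Hence $Y_j$ has a nontrivial left null vector if and only if $\begin{bsmallmatrix} X_j \\ \mathbf{1}_{p_j}^\top\end{bsmallmatrix}$ does; equivalently, $Y_j$ has full row rank $2n+1$ if and only if $\begin{bsmallmatrix} X_j \\ \mathbf{1}_{p_j}^\top\end{bsmallmatrix}$ has full row rank $n+1$.

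Combining the two equivalences gives the announced result, and I would state it separately for each $j \in \mathbb K$ since no coupling between modes intervenes. There is no real obstacle: the whole argument is a structural linear-algebra manipulation, and the only subtle point to spell out is the equivalence between $MM^\top \succ 0$ and full row rank, which would take a single sentence. The proof is therefore short, and I would present it in three compact paragraphs following the outline above.
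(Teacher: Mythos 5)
Your proof is correct and takes essentially the same approach as the paper's: both reduce informativity of $Y_j$ to a full-row-rank condition and exploit the block structure of $Y_j$ (the leading $\begin{bsmallmatrix}I_n\\0\\0\end{bsmallmatrix}$ column block) to isolate $\begin{bsmallmatrix}X_j\\ \mathbf{1}_{p_j}^\top\end{bsmallmatrix}$. The only cosmetic difference is that the paper reads the persistent-excitation block off as a trailing principal submatrix of $Y_jY_j^\top$ for the forward direction, whereas you handle both directions with a single left-null-vector argument, which is if anything slightly cleaner.
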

\begin{proof}
Assume that matrices $\{Y_j\}_{j\in\mathbb K}$ are informative, this means that $Y_jY_j^\top$ are non-singular. More particularly, the lowest $(n+1)\times (n+1)$ diagonal block are given by $\begin{bsmallmatrix}
X_j\\ \mathbf{1}_{p_j}^\top\end{bsmallmatrix}\begin{bsmallmatrix}
X_j\\ \mathbf{1}_{p_j}^\top\end{bsmallmatrix}^\top$, which are nonsingular as well. Therefore, matrices $\{X_j\}_{j\in\mathbb K}$ are persistently excited. The reverse implication is direct by noting that if $\{X_j\}_{j\in\mathbb K}$ are assumed to be persistently excited, then matrices $Y_j$ are full row rank because of the structure of $Y_j$ in \eqref{eq:def:Y_j}. 
\end{proof}

\begin{assumption}\label{Assumption0}
Matrices $\{Y_j\}_{j\in\mathbb K}$ are informative. 
\end{assumption}
In the sequel, we will use the same lines as those that have been introduced in \cite{vanWaarde2020noisy,van2020data}, consisting of imposing the following assumption on the data noise. 

\begin{assumption}\label{Assumption1}
For all $j$ in $\mathbb K$, there exists a symmetric matrix $\Psi_j=\begin{bsmallmatrix} \Psi_{j1}&\Psi_{j2}\\
 \ast &\Psi_{j3}\\
\end{bsmallmatrix}
$ in $\mathbb S^{n+p_j}$ such that $\Psi_{j3}\prec 0$ and 
\begin{equation}\label{eq:ineqW}
\begin{bmatrix} I_n\\ \omega_j^\top\end{bmatrix}^\top\Psi_j
\begin{bmatrix} I_n\\ \omega_j^\top\end{bmatrix} \succeq 0, \quad 
\end{equation}
for the data noise $\omega_j$ satisfying \eqref{eq:info_data}.
\end{assumption}

Following the setup presented in Section~\ref{Sec:Prel},  \eqref{eq:ineqW}   in Assumption~\ref{Assumption1}  is rewritten following the form given in \eqref{def_Sigma}, i.e., as an inequality expressed in terms of matrices $A_j,B_j$ instead of $\omega_j$. Therefore, we now define the set $\Sigma(\Psi_j,Y_j)$ as follows:
 \begin{equation}\label{def_Sigmaj}
 \Sigma(\Psi_j,Y_j):=\left\{
 [A_{\!j} \ B_{\!j}]\!\in\!\mathbb R^{n\times (n\!+\!1)} \mbox{ s.t. } \left[\begin{smallmatrix}
I\\
A_j^\top\\
B_j^\top 
\end{smallmatrix}\right]^{\!\top}Y_j \Psi_jY_j^\top \left[\begin{smallmatrix}
I\\
A_j^\top\\
B_j^\top 
\end{smallmatrix}\right]\succeq 0
 \right\}.
 \end{equation}

As in Section~\ref{Sec:Prel}, this set represents all the possible pairs of matrices $A_j$ and $B_j$ in $\mathbb R^{n\times n}$ and $\mathbb R^{n\times 1}$ that verify \eqref{eq:info_data} and where the data noises $\omega_j$ verify Assumption~\ref{Assumption1}.

Summing up the previous ingredients, the problem can be formulated as follows.
\begin{problem}
Find decision variables $\{(W_i,\zeta_i)\}_{i\in \mathbb D_\nu}$ in $\mathbb{S}^{n}\times \mathbb R^{n}$ such that, for all $i$ in $\mathbb D_\nu$
\begin{equation}\label{Pb:DataDrivenSAS}
\Phi_i(A_{\nu(i)}, B_{\nu(i)})\succ0, \quad \forall (A_{\nu(i)}, B_{\nu(i)})\in \Sigma(\Psi_{\nu(i)}, Y_{\nu(i)}).
\end{equation}
\end{problem}
This problem fits exactly in the framework presented in Section~\ref{Sec:Prel}, and a solution is provided in the theorem below.
More precisely, if such a problem admits a solution, then it is possible to build a stabilizing control law such that the closed-loop trajectories of the switched affine system \eqref{eq:model_x} globally and asymptotically  converge to an outer estimation of the robust limit cycle $\mathcal S_\nu$, composed of the union of shifted ellipsoids, characterized by $\{(W_i,\zeta_i)\}_{i\in \mathbb D_\nu}$, solution to the LMI problem~\eqref{Pb:DataDrivenSAS}. 

It is also worth noting that the previous problem refers to a data-driven control design, since matrices $A_j,B_j$ are not involved in the previous problem, which only requires data experiments.

\subsection{Main result}

This section provides a new contribution to the data-based design of the stabilizing switching control law for switched affine systems. In this situation, the lack of knowledge of the system matrices $A_j,B_j$ prevents the use of the method provided in \cite{egidio2020global}, since, again, the control law developed therein depends explicitly on $A_j,B_j$. Furthermore, since the data are corrupted by external disturbance $\omega$ (if $\lambda\neq0$), it is not possible to find an exact expression of $A_j,B_j$ based on the data, as the data only provide a set of uncertain allowable matrices $A_j,B_j$. Therefore, the solution provided in \cite{egidio2020global} is not applicable to solve this problem. However, we will demonstrate below how Theorem~\ref{th:theo_robust0} can be easily adapted to data-driven design thanks to the use of  Lemma~\ref{lem0}.

\begin{theorem}\label{th:theo_data-robust}
	For a selected cycle $\nu$ in $\calc$ and for a given parameter $\mu \in \left(0,1\right)$, under Assumptions \ref{Assumption0} and \ref{Assumption1} for given matrices $\{\Psi_j\}_{j\in\mathbb K}$, consider $\{(W_i,\zeta_i,\eta_i,\delta_i)\}_{i\in \mathbb D_\nu}$ in $\mathbb{S}^{n}\times \mathbb R^{n}\times (\mathbb R_{>0})^2$, the solution to the following problem.
\begin{equation}\label{th:LMI_data}
	\begin{array}{rcl}
\bar \Phi_i(Y_{\nu(i)})  \succ0, \ \ W_i \succ0, \ \ \delta_i>0,\ \ \eta_i>0,\ \  \forall i\in \mathbb D_\nu,
	\end{array}
	\end{equation}
with
\begin{equation}\label{eq:WiPsi2}
	\bar \Phi_i(Y_{\nu(i)}) \!= \!
	\left[
\begin{array}{ccc|ccc}
 \!(1 \!-\! \mu)W_i & 0 	&0	  &  \quad 0  &  \quad W_i&0 \\ 
\ast 			& \! \! \! \!\mu\!-\!\delta_i \lambda^2 &0	& \quad -\zeta_{\lfloor i+1\rfloor_\nu}^\top & \quad   \zeta_i^\top &1\\
\ast&\ast & \delta_i I&\quad I& \quad 0&0\\
\hline 
\ast&\ast&\ast&  \multicolumn{3}{c}{\multirow{3}{*}{$
\begin{bsmallmatrix} W_{\lfloor i+1\rfloor_\nu}& 0\\ 0 & 0 \end{bsmallmatrix} -\eta_i Y_{\nu(i)}\Psi_{\nu(i)}Y_{\nu(i)}^\top
$}} \\
\ast&\ast &\ast &   \\
\ast&\ast &\ast &   \\
\end{array} 
\right]
\end{equation}
which depends only on the decision variables and on the data collected in the augmented vector data $Y_j$. 
	Then 
		attractor \begin{equation}\label{def:S}
			\mathcal{S}_\nu:=\bigcup_{i\in\mathbb D_\nu}\mathcal E( W_i^{-1}, \zeta_i) 
		\end{equation}
		where $\mathcal E( W_i^{-1}, \zeta_i)=\left\lbrace x \in \mathbb{R}^n, i\in\mathbb D_\nu \big| (x-\zeta_i)^\top W^{-1}_i(x-\zeta_i) \leq 1 \right\rbrace,$ is robustly globally exponentially stable for system~\eqref{eq:model_x} with the disturbance signal \eqref{eq:cond_w} and with
		the switching control law
		\begin{equation}\label{eq:control_law_robust}
			u(x) = \left\{ \nu\left( \theta \right),~\theta\in\underset{i\in\mathbb D_\nu}{\argmin}\left(x-\zeta_i\right)^\top W_i^{-1}\left(x-\zeta_i\right) \right\}\subset \mathbb K.
		\end{equation}

\end{theorem}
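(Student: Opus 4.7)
The plan is to reduce Theorem~\ref{th:theo_data-robust} to Theorem~\ref{th:theo_robust0} by means of Lemma~\ref{lem0}: I will show that feasibility of the data-driven LMI \eqref{th:LMI_data} implies $\Phi_i(A_{\nu(i)},B_{\nu(i)})\succ0$ for every $(A_{\nu(i)},B_{\nu(i)})$ compatible with the data, and then invoke Theorem~\ref{th:theo_robust0} to recover the robust limit-cycle stability of $\mathcal S_\nu$ with the switching law \eqref{eq:control_law_robust}.

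The first ingredient is to recognise that the dependence of $\Phi_i$ in \eqref{eq:WiPsi0} on the unknown matrices is affine and confined to its last block-column. A direct factorisation puts $\Phi_i$ in the template of item \textit{(i)} of Lemma~\ref{lem0}, with the identifications $\mathcal A=[A_{\nu(i)}\ B_{\nu(i)}]$, $\mathcal M_1=\mathrm{diag}((1-\mu)W_i,\,\mu-\delta_i\lambda^2,\,\delta_i I)$, $\mathcal M_2=W_{\lfloor i+1\rfloor_\nu}$, together with
\[
\mathcal N_1=\begin{bsmallmatrix} 0 \\ -\zeta_{\lfloor i+1\rfloor_\nu}^\top \\ I \end{bsmallmatrix},\qquad \mathcal N_2=\begin{bsmallmatrix} W_i & 0 \\ \zeta_i^\top & 1 \\ 0 & 0 \end{bsmallmatrix}.
\]
The second ingredient is to rewrite the noise bound of Assumption~\ref{Assumption1} as a quadratic constraint on $\mathcal A$: combining identity \eqref{Eq:UncMat-Y} with \eqref{eq:ineqW} yields exactly $\mathcal A\in\Sigma(\Psi_{\nu(i)},Y_{\nu(i)})$ of \eqref{def_Sigmaj}, which is the form required by Lemma~\ref{lem0} once $\Psi$ is replaced by the data-dependent matrix $Y_{\nu(i)}\Psi_{\nu(i)}Y_{\nu(i)}^\top$.

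The remaining step is to check the sign-definiteness hypothesis of Lemma~\ref{lem0} for the transformed $\Psi$. A block computation using the structure of $Y_{\nu(i)}$ given in \eqref{eq:def:Y_j} shows that its bottom-right $(n+1)\times(n+1)$ block equals $\begin{bsmallmatrix}X_{\nu(i)}\\ \mathbf{1}^\top\end{bsmallmatrix}\Psi_{\nu(i),3}\begin{bsmallmatrix}X_{\nu(i)}\\ \mathbf{1}^\top\end{bsmallmatrix}^\top$, which is negative definite because $\Psi_{\nu(i),3}\prec0$ by Assumption~\ref{Assumption1} while $\begin{bsmallmatrix}X_{\nu(i)}\\ \mathbf{1}^\top\end{bsmallmatrix}$ has full row rank by Assumption~\ref{Assumption0} (persistent excitation). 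With the hypothesis verified, $\bar\Phi_i(Y_{\nu(i)})$ in \eqref{eq:WiPsi2} is precisely the template \eqref{lem:IneqY2} for the identifications above with multiplier $\eta=\eta_i$, so the implication \textit{(iii)}$\Rightarrow$\textit{(i)} of Lemma~\ref{lem0} yields $\Phi_i(A_{\nu(i)},B_{\nu(i)})\succ0$ for every admissible pair and, in particular, for the (unknown) true system matrices. Theorem~\ref{th:theo_robust0} then applies verbatim, since the Lyapunov data $(W_i,\zeta_i)$ and the switching law \eqref{eq:control_law_robust} coincide with their model-based counterparts, and the robust global exponential stability of $\mathcal S_\nu$ follows. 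The only step requiring care is the verification of the sign condition on the transformed $\Psi$, as this is precisely where the informativity assumption enters; the remainder is a block identification.
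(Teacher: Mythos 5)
Your proposal is correct and follows essentially the same route as the paper: the same block identification of $\mathcal M_1,\mathcal M_2,\mathcal N_1,\mathcal N_2,\mathcal A$ casting $\Phi_i$ into item \textit{(i)} of Lemma~\ref{lem0}, the same replacement of $\Psi$ by $Y_{\nu(i)}\Psi_{\nu(i)}Y_{\nu(i)}^\top$ via \eqref{Eq:UncMat-Y}, and the same conclusion through Theorem~\ref{th:theo_robust0}. Your verification of the hypothesis $\Psi_3\prec 0$ by directly computing the bottom-right block as the congruence $\bigl[\begin{smallmatrix}X_{\nu(i)}\\ \mathbf{1}^\top\end{smallmatrix}\bigr]\Psi_{\nu(i),3}\bigl[\begin{smallmatrix}X_{\nu(i)}\\ \mathbf{1}^\top\end{smallmatrix}\bigr]^\top$ of a negative definite matrix by a full-row-rank factor is a slightly more direct version of the paper's argument (which introduces auxiliary $R_j,\epsilon_j$), but it is the same idea resting on the same informativity assumption.
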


\begin{proof} The idea of the proof is to prove that conditions \eqref{eq:WiPsi2} and \eqref{Pb:DataDrivenSAS} are equivalent, thanks to the use of  Lemma~\ref{lem0} in this specific context. To do so, we need to understand how inequality \eqref{Pb:DataDrivenSAS} verifies the requirements and assumptions of Lemma~\ref{lem0}. In particular, the proof will be divided into the three following steps:
\begin{enumerate}[label=(\roman*)]	
\item Rewrite inequality $\Phi_i(A_{\nu(i)}, B_{\nu(i)})\succ0$ as in \eqref{lem0_1}.
\item \textcolor{blue}{Verify that conditions $\Psi_{j3}
:=\left[\begin{smallmatrix}
0_{n,n+1}\\
I_{n+1}\\
\end{smallmatrix}\right]^\top  Y_j\Psi_jY_j ^\top \left[\begin{smallmatrix}
0_{n,n+1}\\
I_{n+1}\\
\end{smallmatrix}\right]\preceq 0 $, for all $j \in \mathbb K$ hold.}
\item Apply Lemma~\ref{lem0}.
\end{enumerate}

\textit{Proof of (i):} To do so, let us note that $\Phi_i (A_{\nu(i)}, B_{\nu(i)})$ from Theorem \ref{th:theo_robust0} can be rewritten as follows
$$
\begin{array}{lclllcl}
		\Phi_i (A_{\nu(i)}, B_{\nu(i)}) 
		&=&\left[\begin{array}{c|c}
		    	\begin{bmatrix} (1 \!-\! \mu)W_i &\!0  &0\\
			\ast &\! \mu\!-\!\delta_i \lambda^2 &0\\
			\ast&\ast &\delta_i I \end{bmatrix} & \begin{bmatrix}
			W_i A_{\nu(i)}^\top \\ 
			\! \zeta_i^\top A_{\nu(i)}^\top +B_{\nu(i)}^\top-\zeta_{\lfloor i+1\rfloor_\nu}^\top\\
			I\end{bmatrix}\\ \hline 
		    	\ast  &\! W_{\lfloor i+1\rfloor_\nu} 
		    \end{array}\right]\\ \\
		    &=&\left[\begin{array}{c|c}
		    \begin{bmatrix} (1 \!-\! \mu)W_i &\!0  &0\\
			\ast &\! \mu\!-\!\delta_i \lambda^2 &0\\
			\ast&\ast &\delta_i I \end{bmatrix} & 
			 \begin{bmatrix} 
			0 \\
			-\zeta_{\lfloor i+1\rfloor_\nu}^\top\\I \end{bmatrix}
+\begin{bmatrix} 
			 W_i &0 \\
			 \zeta_i^\top &1\\
			 0&0\end{bmatrix} \left[\begin{matrix}
A_j^\top\\
B_j^\top 
\end{matrix}\right]\\ \hline
		    	\ast  &\! W_{\lfloor i+1\rfloor_\nu} 
		    \end{array}\right].
\end{array}
$$

Therefore, we can identify that 
$$\begin{array}{lclllclllcllclllclllcl}
\mathcal M_1&=&\begin{bmatrix} (1 \!-\! \mu)W_i &\!0 &0 \\
			\ast &\! \mu\!-\!\delta_i \lambda^2 &0\\
			\ast&\ast &\delta_i I \end{bmatrix},& 
\mathcal M_2&=&W_{\lfloor i+1\rfloor_\nu},\
\mathcal N_1 &=&\begin{bmatrix} 
			0 \\
			-\zeta_{\lfloor i+1\rfloor_\nu}^\top\\I \end{bmatrix},
\mathcal N_2 =\begin{bmatrix} 
			 W_i &0 \\
			 \zeta_i^\top &1\\
			 0&0\end{bmatrix},&\mathcal A^\top &=& \left[\begin{matrix}
A_j^\top\\
B_j^\top 
\end{matrix}\right].
\end{array}
$$
Lemma~\ref{lem0} ensures that having $\Phi_i (A_{\nu(i)}, B_{\nu(i)})\succ0$ is equivalent to $\bar \Phi_i(Y_{\nu(i)})\succ0$, for all $i\in\mathbb D_\nu$, which concludes this part of the proof.
\newline
\textit{Proof of (ii):} Comparing the definition of $\Sigma(\Psi)$ in \eqref{def_Sigma} and $\Sigma(\Psi_j,Y_j)$ in \eqref{def_Sigmaj}, it yields $\Psi=Y_j\Psi_j Y_j^\top=\begin{bsmallmatrix}
\Psi_{j1}&\Psi_{j2}\\\Psi_{j2}^\top&\Psi_{j3}
\end{bsmallmatrix}$ for any $j$ in $\mathbb K$. Therefore, to apply lemma~\ref{lem0}, we need to verify that 
 $\Psi_{j3}\prec 0$, for all $j \in \mathbb K$. To do so, let us recall that, in Assumption~\ref{Assumption1}, conditions $\Psi_{j3}\leq0$ ensure that there exist, for any $j$ in $\mathbb K$, a symmetric positive definite matrix $0\prec R_j\in\mathbb S^{n}$ and a scalar $\epsilon_j$ such that matrix 
$\Psi_j- \begin{bsmallmatrix}R_j&0\\0&0 \end{bsmallmatrix}\prec -\epsilon_j I.$
Pre- and post-multiplying this inequality by $Y_j$ and its transpose, respectively,
we have 
$$
Y_j\left( \Psi_j- \begin{bsmallmatrix}R_j&0\\0&0 \end{bsmallmatrix}\right)Y_j^\top \prec -\epsilon_j Y_jY_j^\top.
$$
Since the data matrix $Y_j$ is assumed to be informative, i.e. matrices $Y_jY_j^\top$ are non-singular, there exists $\bar\epsilon_j>0$ such that $Y_j \left(\Psi_j-\begin{bsmallmatrix}R_j&0\\0&0 \end{bsmallmatrix}\right)Y_j^\top \prec -\bar\epsilon_j I$. Furthermore, recalling the structure of the data matrix $Y_j$ that makes  $\begin{bsmallmatrix}
0_{n,n+1}\\
I_{n+1}\\
\end{bsmallmatrix}^\top Y_j\begin{bsmallmatrix}R_j&0\\0&0 \end{bsmallmatrix}Y_j^\top\begin{bsmallmatrix}
0_{n,n+1}\\
I_{n+1}\\
\end{bsmallmatrix}^\top=0$, the following is finally gotten 
$$
\left[\begin{matrix}
0_{n,n+1}\\
I_{n+1}\\
\end{matrix}\right]^\top  Y_j\Psi_jY_j \left[\begin{matrix}
0_{n,n+1}\\
I_{n+1}\\
\end{matrix}\right]\prec -\bar\epsilon_j \left[\begin{matrix}
0\\
I_{n+1}\\
\end{matrix}\right]^\top \left[\begin{matrix}
0\\
I_{n+1}\\
\end{matrix}\right]
= -\bar\epsilon_j 
I_{n+1} \textcolor{blue}{\prec 0},$$
which was to be proven.  
\end{proof}

	The previous theorem allows to design a control law that stabilizes system \eqref{eq:model_x}, to the attractor defined by set $\mathcal S_\nu$ given in \eqref{def:S}, which is a union of shifted ellipsoids $\mathcal S_\nu=\cup_{i\in\mathbb D_\nu}\mathcal E( W_i^{-1}, \zeta_i)$. 

\textcolor{blue}{
\begin{remark}
The complexity of the LMI condition of Theorem \ref{th:theo_data-robust} relies on the number of decisions variables ($(n(n+1)/2+n+2) N_\nu$) and on the dimension of the condition ($(4n+2) N_\nu$). Both numbers are lightly higher than the ones of Theorem \ref{th:theo_robust0}. Again the length of the limit cycle under consideration, $N_\nu$ is limited to $10$ in practice, then the complexity of the LMI of Theorem \ref{th:theo_data-robust} is  reasonable.
\end{remark}
}

\begin{figure}[t]
		\centering
		  \includegraphics[width=0.45\textwidth]{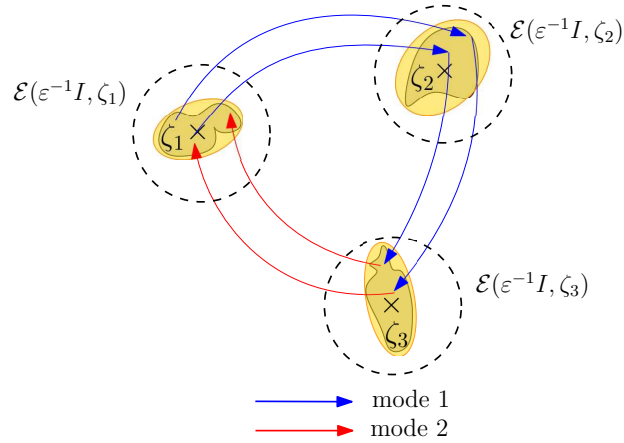}		  

	\caption{Principles of the optimization of the attractor $\mathcal S_\nu$ thank to the minimization of the a decision variable $\varepsilon$.}	\label{fig:optimization}
	\end{figure}

\section{Optimization}
Theorems~\ref{th:theo_robust0} and~\ref{th:theo_data-robust} present a model-based and a data-based condition for the stabilization and the existence of an attractor for the closed-loop system. Both conditions are expressed in terms of LMIs, whose solutions impact both the control laws and the size of the attractors notably through the matrices $W_i$, $i\in\mathbb D_\nu$. It is then natural to ask for the optimization of these solutions in order to minimize the size of the attractors, which can be achieved by ``minimizing'' matrices $W_i$.

\subsection{Optimization of the attractor}

 There are many ways to perform such minimization. One of them could be to introduce a cost function consisting in the sum of the trace of these matrices. Here we rather consider the additional constraint $W_i\prec \varepsilon I_n$, for some positive scalar $\varepsilon$. This additional inequality can be interpreted as constraining the ellipsoids $\mathcal E(W_i,\zeta_i)$ to be all included in the ball of radius $\sqrt{\varepsilon}$, centered at the same location (see Fig.~\ref{fig:optimization}). In other words, we impose \begin{equation}\label{eq:inclusion_opt0}
\mathcal E(W_i^{-1},\zeta_i)\subset \mathcal E(\varepsilon^{-1} I_n,\zeta_i),\quad \forall i\in \mathbb D_\nu.
 \end{equation} 
An additional feature can be added to reduce the ``size'' of this attractor by introducing an optimization problem, as described below.
\begin{opt}
For a given $\mu$, the optimal solution to the condition of Theorem~\ref{th:theo_robust0} that verifies inclusion \eqref{eq:inclusion_opt0} with the smallest $\varepsilon$ is obtained by solving the following optimization problem
\begin{equation}\label{opt:model}
	\begin{array}{rcl}
\varepsilon^\ast &=&\displaystyle \min_{\displaystyle\{ (W_i,\zeta_i,\delta_i)\}_{i\in \mathbb D_\nu}} \varepsilon\\
	&\mbox{s.t.}& \Phi_i(A_{\nu(i)},B_{\nu(i)})  \succ0, \ \varepsilon I \succ W_i \succ0, \ \delta_i>0, \quad \forall i\in \mathbb D_\nu,
	\end{array}
	\end{equation}
	where $\Phi_i(A_{\nu(i)},B_{\nu(i)})$ is given in \eqref{eq:WiPsi0}.
\end{opt}

Similarly, an optimization procedure can be added to the conditions of Theorem~\ref{th:theo_data-robust} as provided below.
\begin{opt}
For a given $\mu$, the optimal solution to the condition of Theorem~\ref{th:theo_data-robust}, that verifies inclusion \eqref{eq:inclusion_opt0} with the smallest $\varepsilon$ is obtained by solving the following optimization problem
\begin{equation}\label{opt:data}
	\begin{array}{rcl}
\varepsilon^\ast &=&\displaystyle \min_{\displaystyle\{ (W_i,\zeta_i,\eta_i,\delta_i)\}_{i\in \mathbb D_\nu}} \varepsilon\\
	&\mbox{s.t.}&\bar \Phi_i(Y_{\nu(i)})  \succ0, \ \varepsilon I \succ W_i \succ0, \ \delta_i>0, \ \eta_i>0,\ \forall i\in \mathbb D_\nu,
	\end{array}
	\end{equation}
	where $\bar \Phi_i(Y_{\nu(i)})$ is given in \eqref{eq:WiPsi2}.
\end{opt}

Introducing constraints $\varepsilon I \succ W_i$ and minimizing $\varepsilon$ refer to a usual optimization to reduce the size of the components of the attractors as for Theorem~\ref{th:theo_robust0}.

Apart from the usual causes associated with Lyapunov and LMI techniques, two reasons may explain this conservatism in the numerical result. The first is related to the fact that the optimization problem presented in Theorem~\ref{th:theo_robust0} requires the use of two successive S-procedures. It is well known that this introduces conservatism compared to the situation where only one is performed. Second, the optimal value $\varepsilon^\ast$ strongly depends on parameter $\mu$, which has been fixed a priori.
\textcolor{blue}{It would be possible to refine the values obtained for $\varepsilon$ by tuning this parameter, using a ``line-search'' algorithm on $\mu$.}

\subsection{Optimization selection of the cycle}

\textcolor{blue}{Following the idea first presented in \cite{egidio2020global}, it is possible to introduce a cost function that aims at evaluating the performance of a cycle. For instance, this cost function should reflect for each cycle the chattering effects, which means the amplitude of the trajectories within the (robust) limit cycle, or the distance of the (robust) limit cycle to a desired functioning point $x_d$ in $\mathbb R^n$, selected by the designer. However, the solution proposed in \cite{egidio2020global} relies on the exact knowledge of the limit cycle, i.e., $\{\rho_i\}_{i\in\mathbb D_\nu}$ generated by a given cycle $\nu$ in the situation of system \eqref{eq:model_x} without disturbances. }

\textcolor{blue}{When considering robust limit cycles and its outer estimation using the attractor $\mathcal S_\nu$, such a procedure cannot be employed directly, since the estimation of the robust limit cycle is provided by the solution of the LMI optimization problems \eqref{opt:model} and \eqref{opt:data}. Instead, the solution provided in \cite{serieye2022attractor} allows formulating such a cost function based on the decision variables of the LMI optimization problem. For the sake of readability and to avoid an overlap with this contribution, such extension is not presented here but the reader can refer to \cite[Section 6]{serieye2022attractor}.}

\section{Numerical applications}
\label{sec:sim}

\begin{figure*}
     \centering
     \begin{subfigure}[b]{\textwidth}
         \centering
  \includegraphics[width=0.85\textwidth]{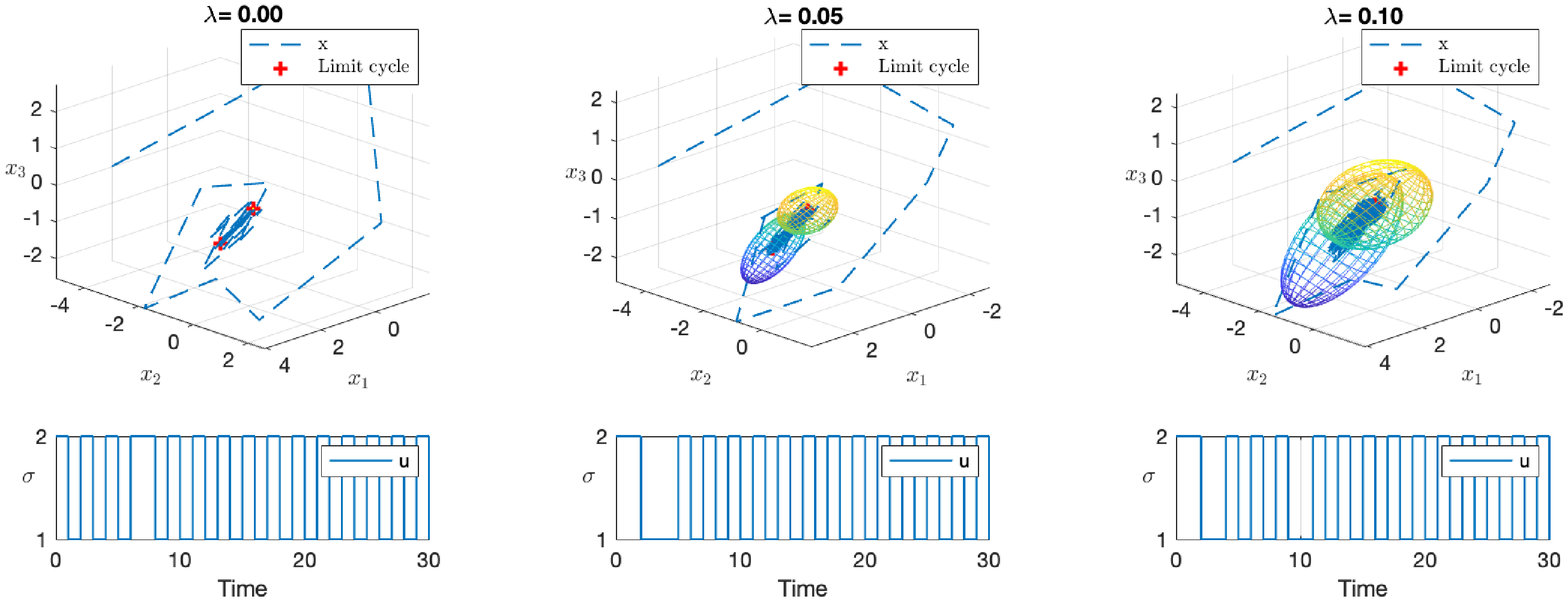}
         \vspace{-0.2cm}
         \caption{Solutions obtained for cycle $\nu_1=\{1,2\}$.}
         \label{fig:model-basednu1}
     \end{subfigure}
     \hfill
     \begin{subfigure}[b]{\textwidth}
         \centering
         \includegraphics[width=0.85\textwidth]{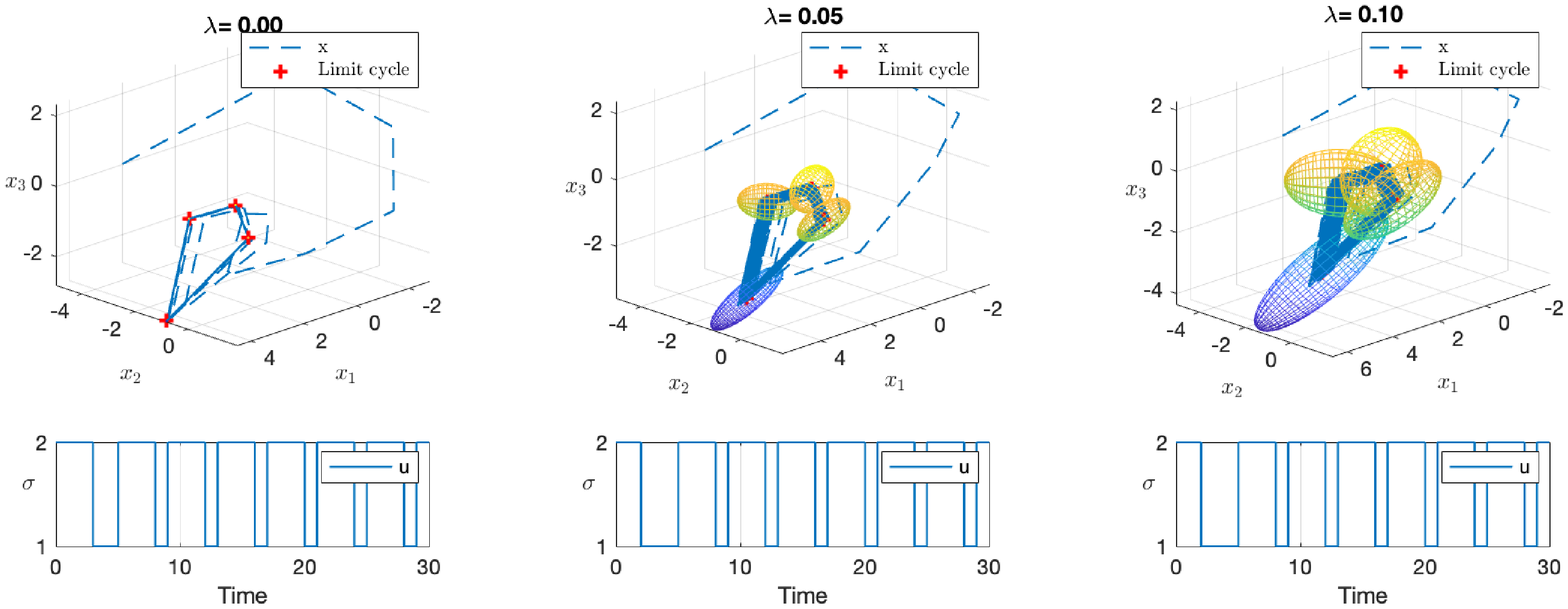}
         \vspace{-0.2cm}
         \caption{Solutions obtained for cycle $\nu_2=\{1,2,2,2\}$.}
         \label{fig:model-basednu2}
     \end{subfigure}
     \hfill
     \begin{subfigure}[b]{\textwidth}
         \centering
         \includegraphics[width=0.85\textwidth]{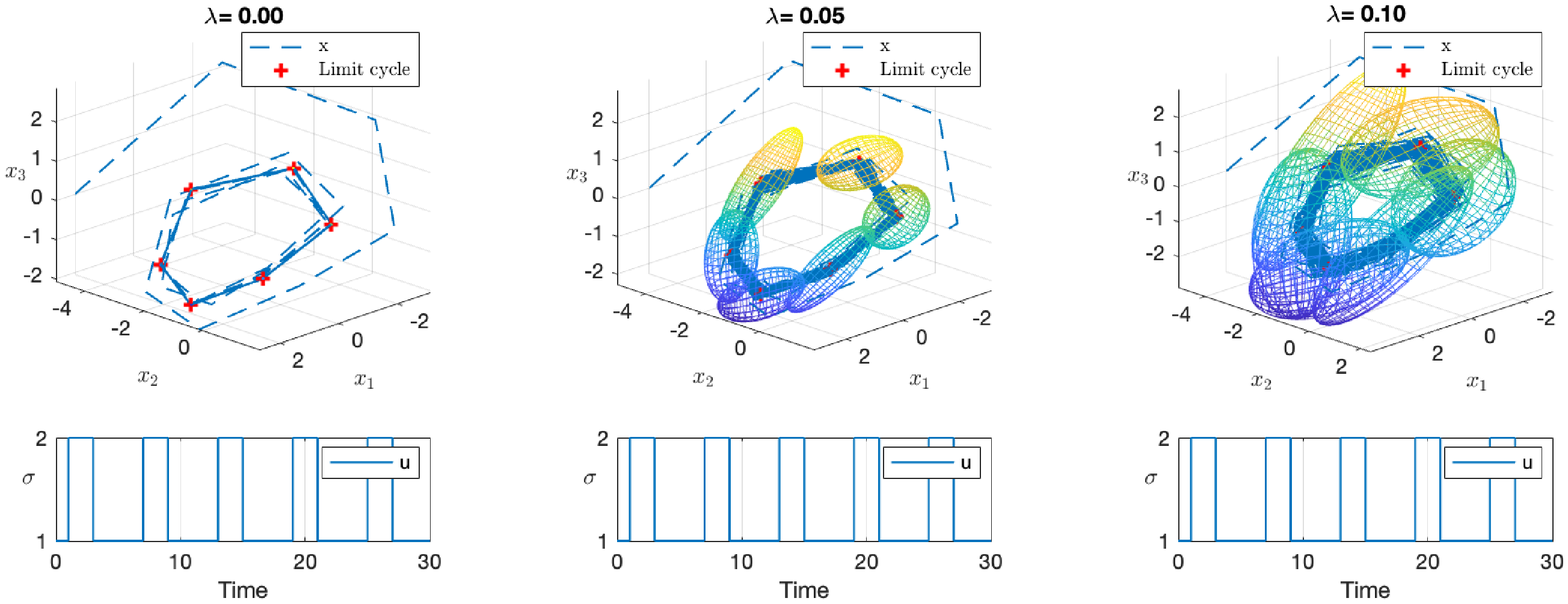}
         \vspace{-0.2cm}
         \caption{Solutions obtained for cycle $\nu_3=\{1,1,1,1,2,2\}$.}
         \label{fig:model-basednu3}
     \end{subfigure}
        \caption{State and input trajectories of the closed-loop system for three different cycles obtained by solving Theorem~\ref{th:theo_robust0} for $\lambda=0$, $0.05$ and $0.1$ with $\mu=0.1$ with the initial condition $x_0=[2 \ -5 \ 0]^\top$, for all the cycles and bounds $\lambda$, considered here.}
        \label{fig:model-based}
\end{figure*}

\subsection{System data}
To illustrate the two results presented in this paper, here we will consider an example of switched affine systems \eqref{eq:model_x} borrowed from \cite{deaecto2016stability},  where the matrices $A_i$ and $B_i$ are defined as follows. 
\begin{equation}\label{eq:discretize}
		A_i = e^{F_i T}, \quad B_i = \int_0^T e^{F_i \tau}\text{d}\tau g_i, \quad \forall i \in \left\lbrace 1,2 \right\rbrace,
\end{equation}
where $T=1$ is the sampling period of the associated continuous-time system $\dot x=F_jx+g_j$ defined with matrices $F_j$ and $g_j$, with $j\in \mathbb K=\{1,2\}$ given by 
\begin{equation*}
	\begin{array}{llll}
		F_1\! =\! 
		\left[\begin{smallmatrix}
			\ 0&\ 1&\ 0\\
			\ 0&\ 0&\ 1\\
			\! -1&\! -1&\! -1
		\end{smallmatrix}\right], & F_2\! =\! 
		\left[\begin{smallmatrix}
			0&\ 1&\ 0\\
			0&\ 0&\ 1\\
			0&\! -1&\! -1
		\end{smallmatrix}\right], & g_1\! =\! 
		\left[\begin{smallmatrix}
			1\\
			0\\
			0
		\end{smallmatrix}\right], & g_2 \!=\! 
		\left[\begin{smallmatrix}
			0\\
			1\\
			0
		\end{smallmatrix}\right]\!.
	\end{array}
\end{equation*}

\subsection{Model-based design: solution and simulations}
First, let us consider a model-based solution. To do so, we will consider three cycles $\nu_1=\{1,2\}$, $\nu_2=\{1, 2, 2, 2\}$, and $\nu_3=\{1,1,1,1,2,2\}$ to illustrate the potential of the result.

Solving the conditions of Theorem~\ref{th:theo_robust0}, for these three cycles and for three different values of $\lambda$, the simulations of the closed-loop systems are depicted in Figure~\ref{fig:model-based}. First, the figure shows that the attractor $\mathcal S_\nu$ is indeed composed of the union of ellipsoids. The number of ellipsoids in the attractor equals the length of the cycle. The state trajectories converge, in all cases, to the attractor. In addition, it can also be seen that the control input $u$ converges to a periodic trajectory, which is a shifted version of the desired cycle. This characteristic has been demonstrated in \cite{serieye2022attractor}, but not in this paper, as the proof is very similar. It is worth mentioning that this periodic behavior of the control input is only guaranteed when the intersection of any pair of ellipsoids composing the attractor is empty. This figure also illustrates that the size of the attractor increases with the amplitude of the disturbance $\lambda$. Indeed, the effect of the disturbance on the state trajectory is visible in this figure.

For the first and second cycles $\nu= \{1,2\}$ and $\nu= \{1,2,2,2\}$, the values of $\varepsilon^\ast$ obtained by solving Theorem~\ref{th:theo_robust0} are given in Table~\ref{Tab:eps}. These solutions show that increasing $\lambda$ naturally leads to an increase in $\varepsilon^\ast$. 

\subsection{Data-based design: Generation of the data}

In order to generate the data, the following procedure was employed. Matrix $X_{j0}$ has been selected as the solution to the system with $\sigma=j$ and an arbitrary initial condition $x_{j,0}$ with magnitude less than $1$. In other words, we have
$$
x^+_{j,\ell}=x_{j,\ell+1}=A_jx_{j,\ell}+B_j +w_{j,\ell},\quad \forall (j,\ell)\in \mathbb K\times [1,p_{max}]_\mathbb N.
$$ 
with the corresponding matrices $X_j$ and $X_j^+$ given in \eqref{def:X_j} and where the maximum number of data considered in this example is $30$. Three cases will be used, corresponding to $p_j=10$, $20$ and $30$ for all $j$ in $\mathbb K$. Note that these three cases use the same set of data.  This means that the $10$ first data from case $p_j=20$ correspond to the data used when $p_j=10$ and the $20$ first data from case $p_j=30$ are the same data as for case $p_j=20$.  

In this setup, the measurement noise has been selected as an arbitrary vector, which corrupts the solution to the noiseless system. The noise $w_j$ is shown in Figure \ref{fig:noise},  only for mode $j=1$ but for three cases of $\lambda$. The norm of the noise vectors is bounded by $\lambda$, and it is also assumed that the overall noise matrices $\omega_j$ verify 
$\omega_{j}\omega_j^\top   \leq \kappa p_j \lambda_j^2 I_n$, for all $j\in \mathbb K$, with $\kappa=0.3$. Such requirements ensure that inequality 
\begin{equation}\label{eq:ineqWexample}
\begin{bmatrix} I_n\\ \omega_j^\top\end{bmatrix}^\top\underbrace{
\begin{bmatrix}
\kappa p_j \lambda_j^2 I_n &0\\
\ast& -I_{p_j}
\end{bmatrix}}_{\Psi_j}
\begin{bmatrix} I_n\\ \omega_j^\top\end{bmatrix} \succeq 0
\end{equation}
holds for all $j\in \mathbb K$. Note that this assumption is related to the covariance matrix when $\omega_j$ is a random variable, as mentioned in \cite{vanWaarde2020noisy}. 

\begin{figure}
		\centering
		       \includegraphics[width=0.57\textwidth]{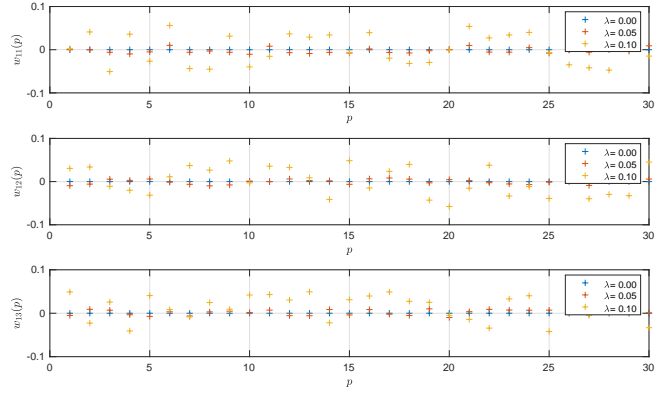}		  
	\caption{Measurement noise $\omega_1=[\omega_{11}\ \omega_{12}\ \omega_{13}]^\top$, affecting the data for mode 1. Similar noise affects the data of the other mode.}
		\label{fig:noise}
	\end{figure}
	\smallskip
	
\begin{figure*}
		\centering
		         \includegraphics[width=\textwidth]{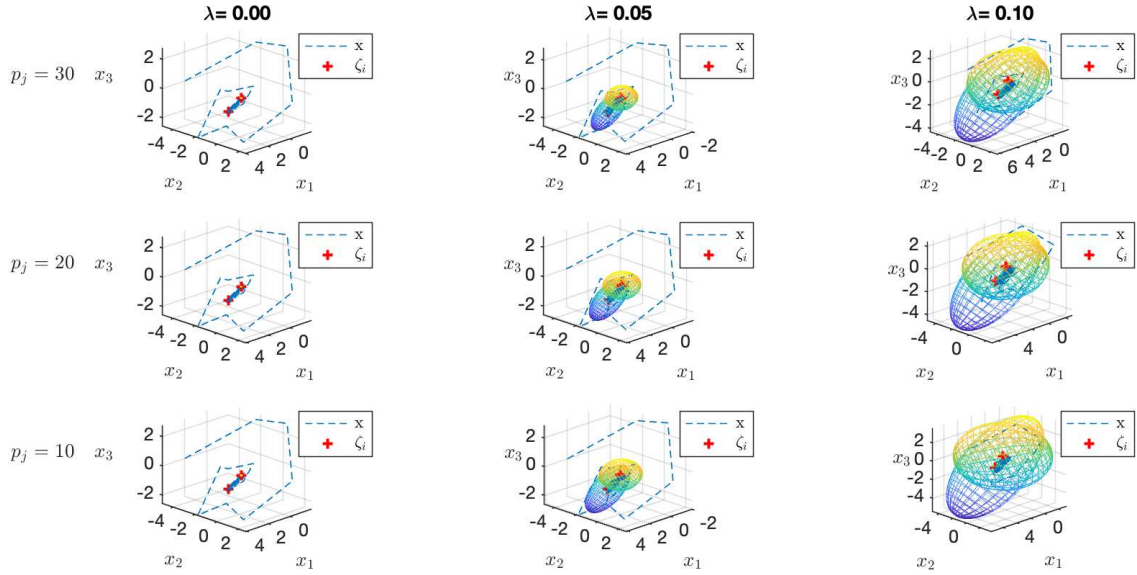}		  
		\caption{Evolution of the state variables (blue dashed line) for cycle $\nu_1=\{1,2\}$ with their associated robust limit cycles. The red crosses depict $\zeta_i$, which are the centers of the ellipsoids $\mathcal E(W_i^{-1},\zeta_i)$. From top to bottom, the plot show the attractors when the number of data available for the design are $30$, $20$ and $10$, respectively. From left to right, the uncertainty bound in the data increases $\lambda=0$, $0.05$ and $0.1$.}
		\label{fig:stab_to_limit_cycle}
	\end{figure*}
	
\subsection{Data-based design: Solution and simulations}

Applying Theorem \ref{th:theo_data-robust} and solving optimization problem~\ref{opt:model}, we have obtained the results that are resumed in Table~\ref{Tab:eps} for $\nu_1=\{1,2\}$ and the associated simulations are depicted in Figure~\ref{fig:stab_to_limit_cycle}. It can be seen from Table~\ref{Tab:eps} that augmenting $\lambda$ imposes an increase of $\varepsilon^\ast$, and that increasing the number of data leads to a decrease of the optimal value $\varepsilon^\ast$. Compared to the model-based approach, the corresponding values of $\varepsilon^\ast$ suffers from a notable increase. For the case $\lambda =0$, the large augmentation of the optimal value can be explained by the fact that Theorem~\ref{th:theo_data-robust} requires an additional S-procedure to obtain the stabilization conditions, which unavoidably introduces conservatism, as commented earlier. Nevertheless, the resulting values of $\varepsilon^\ast$ are still small and the attractors are already accurate as shown in  Figure~\ref{fig:stab_to_limit_cycle}. For the cases, where $\lambda \neq 0$, the increase of the $\varepsilon^\ast$ compared to the model-based approach can also be explained using the previous argument but also by the fact that the noise in the data also leads to additional uncertainties to the system matrices, which need to be compensated by an increase of the size of the attractor.

\begin{figure*}
\centering
\includegraphics[width=\textwidth]{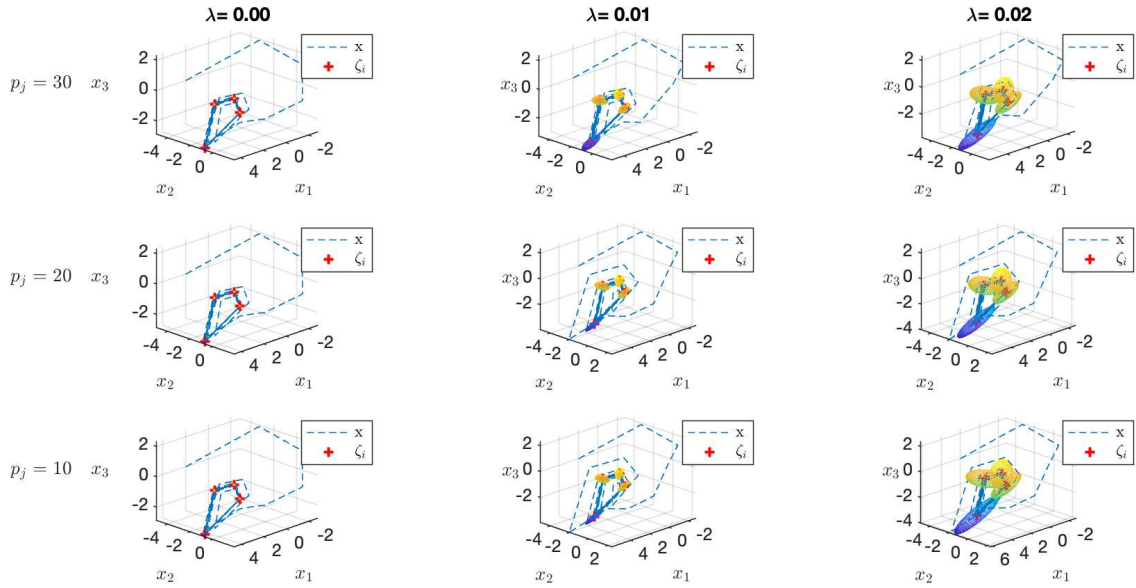}		  
\caption{Evolution of the state variables (blue dashed line) for cycle $\nu_2=\{1,2,2,2\}$ with their associated robust limit cycles. The red crosses represent $\zeta_i$, which are the centers of the ellipsoids $\mathcal E(W_i^{-1},\zeta_i)$. From top to bottom, the plot shows the attractors when the number of data available for the design are $30$, $20$ and $10$, respectively. From left to right, the uncertainty bound in the data increases $\lambda=0$, $0.01,$ and $0.02$.}
\label{fig:stab_to_limit_cycle2}
\end{figure*}

\begin{figure}
		\centering
		 \includegraphics[width=0.7\textwidth]{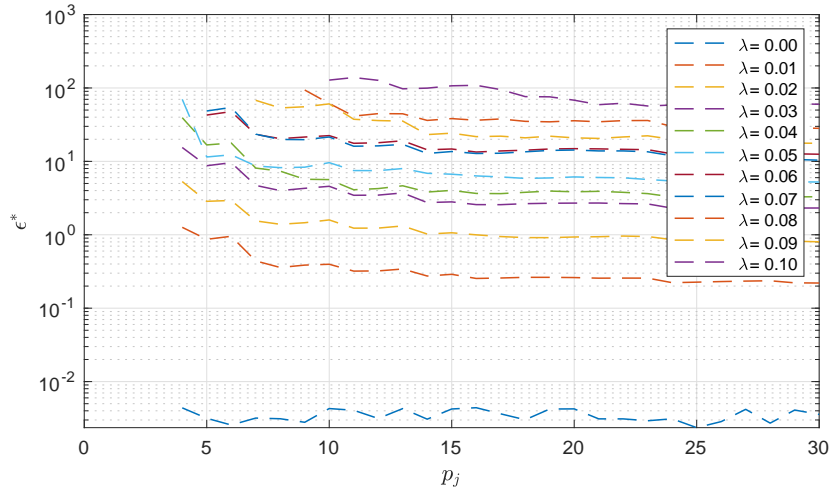}		  
		\caption{Graph showing the optimal value $\varepsilon^\ast$ of optimization problem \eqref{opt:data} for cycle $\nu=\{1,2\}$ and for various values of $\lambda$ from $0$ to $0.1$.}
		\label{fig:eps_p}
\end{figure}

\begin{table}[t]
\begin{center}
\begin{tabular}{|r|c||c|c|c||r|c|c|c|c|}
\multicolumn{2}{c}{Cycle:}& \multicolumn{3}{c}{ $\nu=\{1,2\}$}&\multicolumn{3}{c}{$\nu=\{1,2,2,2\}$}\\
\hline
$\varepsilon^*$  & Data
&$\lambda_j=0$& $\lambda_j=0.05$ &$\lambda_j=0.10$ &$\lambda_j=0$& $\lambda_j=0.01$ &$\lambda_j=0.02$\\
\hline
Th.\ref{th:theo_robust0} &$A_j,B_j$&  0.0005 &   1.1519   & 4.6078 &  0.0008   & 0.0709 &   0.2837\\

\hline

					&$p_j=30$ &  0.0063  &  2.4167 &  23.1562 &  0.0081  &  0.5751 &  2.8390\\  
Th.\ref{th:theo_data-robust} 	&$p_j=20$&    0.0065 &   2.8790 &  26.3315  &   0.0084  &  0.6905 &  3.1607\\
&$p_j=10$&    0.0090 &   3.8825 &  43.3567 &   0.0107  &  0.6922 &  4.5794\\ 
\hline
\end{tabular}
\end{center}
\caption{Optimal values of $\varepsilon^*$ obtained by solving problem~\eqref{opt:data} for various numbers of data $p_j=10$, $20$ and $30$, for cycle $\nu=\{1,2\}$.}
\label{Tab:eps}
\end{table}%


Finally, Figure~\ref{fig:eps_p} presents the variation of $\varepsilon^\ast$ with respect to the number of data measurements for various values of $\lambda$. First of all, this figure shows that optimization problem~\eqref{opt:data} has no solution when the number of data is less than $4=n+1$. This is consistent with the assumption of informativity of the data matrices $Y_j$ in Definition~\ref{def:informativity}. Moreover, as $\lambda$ increases, the same optimization problem requires more than the minimal necessary order of data ($n+1=4$) to obtain a solution. The minimum number of data to obtain a solution, increases with the amplitude of the disturbance. The general tendency is that the optimal solution decreases as the number of data increases. This figure also shows that the values of $\varepsilon^\ast$ tend to a constant as $p$ increases. 

The same example has been treated with a different cycle $\nu_2=\{1,2,2,2\}$. Applying again Theorem \ref{th:theo_robust0} and ~\ref{th:theo_data-robust} and solving the associated optimization problems, we have obtained the numerical results provided in Table~\ref{Tab:eps}. Note that the data-driven conditions deliver attractors composed of larger ellipsoids compared to the previous case with $\nu_1$. This explains why the amplitudes of the disturbance noise have been reduced to $\lambda=0,0.01$ and $0.02$. Note that the optimal value of $\varepsilon$ obtained by solving the model-based optimization problem~\eqref{opt:model} with $\lambda=0.1$ is $7.0$, which is already a very large value. Figure~\ref{fig:stab_to_limit_cycle2} presents the simulations resulting from the same initial condition as the one used in Figure~\ref{fig:model-based}.

In this example, we have compared the numerical results obtained for two cycles, i.e., one of short length, as $\nu_1$ and another one with a larger length, as $\nu_2$. Increasing the length of a cycle makes the attractor larger. Intuitively, this is because the uncertainties in the model matrices propagate from one element of the cycle to the next one, to guarantee the invariance of the attractor.

As a final comment on the numerical application of the data-driven design, it is worth noting that the numerical solutions highly depend on the noise and are very sensitive to it.

\section{Conclusion}\label{sec:conclusion}
In this paper, a first general result is presented that deals with the equivalence of LMIs, which can be seen as an alternative solution to the S-Lemma presented in \cite{vanWaarde2020noisy}, dedicated to the stabilization problem of linear systems. Then, to demonstrate the potential of this alternative, the objective of the paper is to develop a data-based stabilization criterion for the class of switched affine systems, which corresponds to a relevant class of nonlinear systems. First, a model-based control design for switched affine systems subject to a bounded uncertainty is presented as an extension of \cite{serieye2022attractor}. This method, based on the selection \textit{a priori} of a particular sequence of modes, provides a pure state-feedback control law that stabilizes the closed-loop systems to an attractor characterized by a level set of the Lyapunov function. The attractor is made up of the union of possibly disjoint ellipsoids resulting from an optimization problem. Then, this contribution is adapted to the data-driven control design, thanks to application of the first lemma.  An academic example validates the main result, showing that the desired attractor grows with larger noise and smaller amount of data. Future works would aim at applying this contribution to power converters modelled as switched affine systems.

\section*{Acknowledgments}
This work was supported in part by projects PID2019-105890RJ-I00 and PID2019-109071RB-I00, funded by MCIN/ AEI /10.13039/501100011033/  and FEDER A way of making Europe, by the Agence Nationale de la Recherche (ANR)-France under Grant ANR-18-CE40-0022-01.

\bibliography{bibalexv2}

\begin{thebibliography}{10}
\providecommand \doibase [0]{http://dx.doi.org/}%

\bibitem{ebihara2015s}
Ebihara Y, Peaucelle D, Arzelier D. {\it S-variable approach to {LMI}-based
  robust control}. 6.
\newblock Springer .
\newblock 2015.

\bibitem{postlethwaite2007robust}
Postlethwaite I, Turner M, Herrmann G. Robust control applications. {\it Annual
  Reviews in Control} 2007\string; 31(1)\string: 27--39.

\bibitem{scherer2001theory}
Scherer C. {\it Theory of robust control}.
\newblock Citeseer .
\newblock 2001.

\bibitem{boczar2018finite}
Boczar R, Matni N, Recht B. Finite-data performance guarantees for the
  output-feedback control of an unknown system. In: Proceedings of the IEEE
  Conference on Decision and Control. ; 2018\string: 2994--2999.

\bibitem{recht2019tour}
Recht B. A Tour of Reinforcement Learning: The View from Continuous Control.
  {\it Annual Review of Control, Robotics, and Autonomous Systems} 2019\string;
  2\string: 253--279.

\bibitem{berberich2020robust}
Berberich J, Koch A, Scherer C, Allg{\"o}wer F. Robust data-driven
  state-feedback design. In: Proceedings of the IEEE American Control
  Conference. ; 2020\string: 1532--1538.

\bibitem{berberich2020data}
Berberich J, K{\"o}hler J, M{\"u}ller MA, Allg{\"o}wer F. Data-driven model
  predictive control with stability and robustness guarantees. {\it IEEE Trans.
  on Automatic Control} 2020\string; 66(4)\string: 1702--1717.

\bibitem{hou2013model}
Hou Z, Wang Z. From model-based control to data-driven control: Survey,
  classification and perspective. {\it Information Sciences} 2013\string;
  235\string: 3--35.

\bibitem{dePersis2019formulas}
De~Persis C, Tesi P. Formulas for data-driven control: Stabilization,
  optimality, and robustness. {\it IEEE Trans. on Automatic Control}
  2019\string; 65(3)\string: 909--924.

\bibitem{vanWaarde2020noisy}
{Van Waarde} H, Camlibel M, Mesbahi M. From noisy data to feedback controllers:
  non-conservative design via a matrix {S}-lemma. {\it IEEE Trans. on Automatic
  Control} 2022\string; 67(1)\string: 162 - 175.

\bibitem{van2020data}
Van~Waarde H, Eising J, Trentelman H, Camlibel M. Data informativity: a new
  perspective on data-driven analysis and control. {\it IEEE Trans. on
  Automatic Control} 2020\string; 65(11)\string: 4753--4768.

\bibitem{berberich2020combining}
Berberich J, Scherer C, Allg{\"o}wer F. Combining prior knowledge and data for
  robust controller design. {\it arXiv preprint arXiv:2009.05253} 2020.

\bibitem{bisoffi2020data}
Bisoffi A, De~Persis C, Tesi P. Data-based stabilization of unknown bilinear
  systems with guaranteed basin of attraction. {\it Systems \& Control Letters}
  2020\string; 145\string: 104788.

\bibitem{mohler1973bilinear}
Mohler R. {\it Bilinear control processes: with applications to engineering,
  ecology, and medicine}. 106.
\newblock Academic Press .
\newblock 1973.

\bibitem{antunes2016linear}
Antunes D, Heemels W. Linear quadratic regulation of switched systems using
  informed policies. {\it IEEE Trans. on Automatic Control} 2016\string;
  62(6)\string: 2675--2688.

\bibitem{goebel2012hybrid}
Goebel R, Sanfelice R, Teel A. {\it Hybrid Dynamical Systems: modeling,
  stability, and robustness}.
\newblock Princeton University Press .
\newblock 2012.

\bibitem{liberzon2003switching}
Liberzon D. {\it Switching in systems and control}. 190.
\newblock Springer .
\newblock 2003.

\bibitem{egidio2020global}
Egidio L, Daiha H, Deaecto G. Global asymptotic stability of limit cycle and
  ${H}_2/{H}_\infty$ performance of discrete-time switched affine systems. {\it
  Automatica} 2020\string; 116\string: 108927.

\bibitem{serieye2020stabilization}
Serieye M, Albea C, Seuret A, Jungers M. Stabilization of switched affine
  systems via multiple shifted Lyapunov functions. {\it IFAC-PapersOnLine}
  2020\string; 53(2)\string: 6133--6138.

\bibitem{kenanian2019data}
Kenanian J, Balkan R, Tabuada P. Data driven stability analysis of black-box
  switched linear systems. {\it Automatica} 2019\string; 109\string: 108533.

\bibitem{rotulo2021online}
Rotulo M, De~Persis C, Tesi P. Online data-driven stabilization of switched
  linear systems. In: Proceedings of the European Control Conference. ;
  2021\string: 300--305.

\bibitem{della2021data}
Della~Rossa M, Wang Z, Egidio L, Jungers R. Data-driven stability analysis of
  switched affine systems. {\it arXiv preprint arXiv:2109.11169} 2021.

\bibitem{smarra2018data}
Smarra F, Jain A, Mangharam R, D’Innocenzo A. Data-driven switched affine
  modeling for model predictive control. {\it IFAC-PapersOnLine} 2018\string;
  51(16)\string: 199--204.

\bibitem{serieye2022attractor}
Serieye M, Albea C, Seuret A, Jungers M. Attractors and limit cycles of
  discrete-time switching affine systems: nominal and uncertain cases. {\it
  Automatica} 2023\string; 149\string: 110691.

\bibitem{Strogatz_Book_1994}
Strogatz S. {\it Nonlinear Dynamics and Chaos: With Applications to Physics,
  Biology, Chemistry and Engineering}.
\newblock Perseus Books .
\newblock 1994.

\bibitem{Sun_CSF_2008}
Sun Y. Existence and uniqueness of limit cycle for a class of nonlinear
  discrete-time systems. {\it Chaos, Solitons \& Fractals} 2008\string;
  38(1)\string: 89--96.

\bibitem{deaecto2016stability}
Deaecto G, Geromel J. Stability analysis and control design of discrete-time
  switched affine systems. {\it IEEE Trans. on Automatic Control} 2016\string;
  62(8)\string: 4058--4065.

\end{thebibliography}
\clearpage


\end{document}